\newtheorem{thm}{Theorem}%[section]
\newtheorem{lem}[thm]{Lemma}%[section]
\newtheorem{prop}[thm]{Proposition}%[section]
\theoremstyle{definition}
\newtheorem{defn}{Definition}[section]
\newtheorem{notn}{Notation}
\theoremstyle{remark}
\newtheorem{remark}{Remark}[thm] % \renewcommand{\theremark}{}
\theoremstyle{plain}
\numberwithin{equation}{section}
\def\CC{{\mathbb C}}
\def\HH{{\mathbb H}}
\def\NN{{\mathbb N}}
\def\RR{{\mathbb R}}
\def\i{\mathrm{i}}
\def\id{\operatorname{id}}
\def\D{\operatorname{D{}}}
\def\G{\operatorname{G{}}}
\def\SL{\operatorname{SL}}
\def\PSL{\operatorname{PSL}}
\def\Area{\operatorname{Area}}
\def\ord{\operatorname{ord}}
\def\scrA{{\mathcal A}}
\def\scrF{{\mathcal F}}
\def\scrI{{\mathcal I}}
\def\scrL{{\mathcal L}}
\def\scrM{{\mathcal M}}
\def\Re{\operatorname{Re}}
\def\Im{\operatorname{Im}}
\def\Dom{\operatorname{Dom}}
\newcommand{\be}{\begin{equation}}
\newcommand{\ee}{\end{equation}}
\newcommand{\ba}{\begin{align}}
\newcommand{\ea}{\end{align}}
\begin{document}

\begin{center}
\title[The trace formula for a point scatterer]{The trace formula for a point scatterer \\ on a compact hyperbolic surface}
\author{Henrik Uebersch\"ar}
\address{The Raymond and Beverly Sackler Schoool of Mathematical Sciences, Tel Aviv University, Tel Aviv 69978,
Israel.}
\email{henrik@post.tau.ac.il}
\date{\today. This work is part of the author's PhD thesis completed at the University of Bristol in January 2011. Support from an EPSRC doctoral training grant is gratefully acknowledged. The author is currently supported by a Minerva Fellowship.}
\maketitle
\end{center}

\begin{abstract}
An exact trace formula for the perturbation of the Laplacian by a Dirac delta potential on a compact hyperbolic Riemann surface is derived. The formula can be considered an analogue of the Selberg trace formula. The difference of perturbed and unperturbed trace is expressed as an identity term plus a sum over combinations of diffractive orbits which visit the position of the potential. 
\end{abstract}

\section{Introduction}

The Selberg trace formula \cite{S} is a central tool in the spectral theory of automorphic forms \cite{Hj2,Hj3,Iw} and has important applications in number theory \cite{Hj1}. It also plays a central role in the theory of quantum chaos as an exact analogue of Gutzwiller's celebrated trace formula \cite{Bz,CRR,G,PU} which links the distribution of the energy levels of a classically chaotic quantum system to the actions of periodic orbits. 

In quantum mechanics perturbations of self-adjoint Hamiltonians by Dirac delta potentials have long been studied \cite{Z} and serve as a simplified model for the interaction of particles in a quantum system \cite{BF,JSS,Bl}. In the context of Quantum Chaos Seba \cite{Sb} studied the perturbation of the Laplacian by a delta potential on a rectangle with irrational side ratio and Dirichlet boundary conditions. The classical dynamics of such a system - known as a Seba billiard - is integrable and identical to the dynamics on the free rectangular billiard. The quantum waves, however, are scattered by the delta potential. 

Seba observed that the eigenvalues of the perturbed system seem to repell each other. This phenomenon is also observed for quantum systems which are classically chaotic, where generically the distribution of the eigenvalue spacings is believed to be governed by random matrix theory \cite{Bh}. Bogomolny, Gerland and Schmit \cite{Bo} have investigated singular statistics and in the case of the Seba billiard they argue that the distribution of the eigenvalue spacings is not governed by random matrix theory, but in fact seems to be closer to semi-Poisson statistics. The true distribution, however, still remains unknown.

The impact of a delta potential on the spectrum of the Laplacian on hyperbolic surfaces is even more mysterious. Whereas the spectrum of the Laplacian on a rectangle is fairly well understood, much less is known about the spectrum for a hyperbolic surface. We hope that the development of an exact trace formula will shed some light on this interesting problem. 

Hillairet \cite{Hi} proved a semi-classical trace formula for a delta potential on a 3-dimensional Riemannian manifold. There is also a general trace formula for rank-one perturbations due to Krein \cite{Kr} which, however, does not give any information about diffractive orbits. Our work is also related to Venkov's proof \cite{V} of an analogue of the Selberg trace formula for an automorphic Schr\"odinger operator with a continuous nonnegative potential. 

\section{Set Up and Results}

\subsection{Set Up}
Let $\scrM$ be a compact hyperbolic surface. We can represent $\scrM$ as a quotient $\Gamma\backslash\HH$ of the hyperbolic halfplane $\HH$ by a co-compact lattice $\Gamma$ in the group of isometries $\SL(2,\RR)/\lbrace\pm\id\rbrace$. A brief review of hyperbolic surfaces and automorphic functions is given in sections 3.1 and 3.2.

Fix a point $z_{0}\in\scrM$. We denote by $\Delta$ the Laplacian on $\scrM$. Consider the operator $-\Delta_{\alpha,z_{0}}=-\Delta-\alpha(\delta_{z_{0}},\cdot)\delta_{z_{0}}$, $\alpha\in\RR\setminus\lbrace0\rbrace$. Define  $$D_{0}=C^\infty_0(\scrM\setminus\lbrace z_0\rbrace).$$ The operator $-\Delta_{\alpha,z_{0}}$ can be realised as a self-adjoint extension $-\Delta_{\varphi(\alpha)}$, $\varphi(\alpha)\in(-\pi,\pi)$, of the restricted operator $-\Delta_{\alpha,z_{0}}|_{D_{0}}=-\Delta|_{D_{0}}$ which is a positive symmetric operator with deficiency indices (1,1). For background reading on self-adjoint extensions of positive symmetric operators see \cite{Si}, Section X.1. For a discussion of the mathematical theory of rank-one perturbations of differential operators and their realisation as self-adjoint extensions see \cite{Al}, Chapter 1. We discuss the self-adjoint extensions $-\Delta_{\varphi(\alpha)}$ in detail in section 3.3. 

Since $\scrM$ is compact, the spectrum of the positive Laplacian on the surface is discrete. An eigenfunction $\varphi_{j}$, $j=0,1,2,\cdots$, satisfies $-\Delta\varphi_{j}=\lambda_{j}\varphi_{j}$, where $\lambda_{j}\geq0$ is the corresponding eigenvalue, and the spectrum accumulates at infinity: $$\lambda_{0}=0<\lambda_{1}\leq\cdots\leq\lambda_{n}\leq\cdots\to\infty$$

We denote the eigenfunctions of the perturbed Laplacian by $\varphi^{\alpha}_{j}$, $j=0,1,2,\cdots$, and they satisfy $-\Delta_{\varphi(\alpha)}\varphi^{\alpha}_{j}=\lambda^{\alpha}_{j}\varphi^{\alpha}_{j}$. The spectrum of $-\Delta_{\varphi(\alpha)}$ interlaces with the spectrum of $-\Delta$. For a generic surface (cf. \cite{CdV}, remark after Thm. 2), where all eigenvalues have multiplicity $1$ and $\varphi_j(z_0)\neq0$ for all $j$, we have
\begin{equation}
\lambda_{0}^{\alpha}<0=\lambda_{0}<\lambda_{1}^{\alpha}<\lambda_{1}<\cdots<\lambda_{n-1}<\lambda_{n}^{\alpha}
<\lambda_{n}<\cdots\to\infty.
\end{equation}

With regard to the multiplicity of an eigenvalue $\lambda_{j}$ and the associated perturbed eigenvalue $\lambda_{j}^{\alpha}$ we distinguish the following cases (cf. \cite{CdV}, Thm. 2). Denote the associated eigenspaces by $E_{\lambda_{j}}$ and $E_{\lambda^{\alpha}_{j}}$. 

\begin{itemize}
\item[(a1)] $\dim E_{\lambda^{\alpha}_{j}}=\dim E_{\lambda_{j}}=n>1$ if all eigenfunctions in $E_{\lambda_{j}}$ vanish at $z_{0}$,

\item[(a2)] $\dim E_{\lambda^{\alpha}_{j}}=\dim E_{\lambda_{j}}-1=n-1>0$ if at least one eigenfunction in $E_{\lambda_{j}}$ does not vanish at $z_{0}$, the perturbed eigenfunctions are superpositions of two unperturbed eigenfunctions and vanish at $z_{0}$

\item[(b)] $\dim E_{\lambda^{\alpha}_{j}}=\dim E_{\lambda_{j}}=1$ if $\varphi_{j}(z_{0})\neq0$, the associated perturbed eigenfunction is a Green function $G_{\lambda^{\alpha}_{j}}(\cdot,z_{0})$.
\end{itemize}

We write an eigenvalue of $-\Delta$ as $\lambda_{j}=\tfrac{1}{4}+\rho_{j}^{2}$, $\rho_{j}\in\RR_{+}\cup\i[0,\tfrac{1}{2}]$, and an eigenvalue of $-\Delta_{\varphi(\alpha)}$ as $\lambda^{\alpha}_{j}=\tfrac{1}{4}+(\rho^{\alpha}_{j})^{2}$, $\rho^\alpha_{j}\in\RR_{+}\cup\i\RR_{+}$. Denote by $\lbrace\mu_j\rbrace_{j=0}^\infty$ the Laplacian eigenvalues ignoring multiplicities. Denote by $\lbrace\mu^\alpha_j\rbrace_{j=0}^\infty$ the perturbed eigenvalues of type (b). We refer to them as new eigenvalues to distinguish them from the perturbed eigenvalues of types (a1) and (a2). We write $\mu_j=\tfrac{1}{4}+r_j^2$ and $\mu^\alpha_j=\tfrac{1}{4}+(r^\alpha_j)^2$. The new eigenvalues $\lbrace\mu^\alpha_j\rbrace$ are determined from the zeros of a certain meromorphic function (cf. p. 9, Prop. 3). As we will see later (cf. p. 13, Thm. 6) there is exactly one new eigenvalue in between two consecutive old eigenvalues.

We introduce the following space of test functions.
\begin{defn}
Let $\sigma\geq\tfrac{1}{2}$ and $\delta>0$. We define $H_{\sigma,\,\delta}$ to be the space of functions $h:\CC\to\CC$, s.t
\begin{enumerate}
\item[(i)] $h$ is even,
\item[(ii)] $h$ is analytic in the strip $\left|\Im{\rho}\right|\leq\sigma$,
\item[(iii)] $h(\rho)\ll(1+|\Re\rho|)^{-2-\delta}$ uniformly in the strip $\left|\Im\rho\right|\leq\sigma$.
\end{enumerate}
\end{defn}

\begin{remark}
It follows from the observations above about the multiplicities that we have the identity
\begin{equation}\label{specid}
\sum_{j=0}^\infty\lbrace h(r_j^\alpha)-h(r_j)\rbrace=\sum_{j=0}^\infty\lbrace h(\rho_j^\alpha)-h(\rho_j)\rbrace.
\end{equation}
where the sums are absolutely convergent because of the decay of $h$ and Weyl's law.
\end{remark}

\subsection{Results}
Let $m_{\Gamma}=|\scrI|$, where
\begin{equation}
\scrI=\lbrace\gamma\in\Gamma\mid\gamma z_{0}=z_{0}\rbrace.
\end{equation}
Throughout this paper we denote $\psi(s)=\tfrac{1}{2\pi}\Gamma'(s)/\Gamma(s)$, where $\Gamma(s)$ is the usual Gamma function.

Let $\nu>v_{\beta}$ if $1+m_{\Gamma}\beta\psi(\tfrac{1}{2}+\i\rho)$ has a zero $-\i v_{\beta}$ in the interval $(0,-\i\sigma)$, and $\nu=0$ otherwise. In fact there is at most one zero of $1+m_{\Gamma}\beta\psi(\tfrac{1}{2}+\i\rho)$ in the halfplane $\Im\rho<0$ and it lies on the imaginary axis. To see this consider the representation (B9) in \cite{Iw} (beware of our additional factor $\tfrac{1}{2\pi}$). It follows that
\begin{equation}
\Im(1+m_{\Gamma}\beta\psi(\tfrac{1}{2}+\i\rho))=m_{\Gamma}\beta\frac{\Re\rho}{2\pi}\sum_{n=0}^{\infty}|n+\tfrac{1}{2}+\i\rho|^{-2},
\end{equation}
so $1+m_{\Gamma}\beta\psi(\tfrac{1}{2}+\i\rho)$ has no zeros off the imaginary line. For $\rho=-\i v$ and $v>\tfrac{1}{2}$ we have $1+m_{\Gamma}\beta\psi(\tfrac{1}{2}+v)\in\RR^{+}$ and
\begin{equation}
\frac{d}{dv}(1+m_{\Gamma}\beta\psi(\tfrac{1}{2}+v))=\frac{m_{\Gamma}\beta}{2\pi}\sum_{n=0}^{\infty}(n+\tfrac{1}{2}+v)^{-2}>0.
\end{equation}
This together with the asymptotics $\Gamma'(s)/\Gamma(s)=\log s+O(|s|^{-1})$ for $\Re s>\tfrac{1}{2}$ (cf. (B11), pp. 198-9 in \cite{Iw}) and the observation that $\lim_{v\to-1/2^{+}}\psi(\tfrac{1}{2}+v)=-\infty$ implies that $1+m_{\Gamma}\beta\psi(\tfrac{1}{2}+\i\rho)$ has exactly one zero in the halfplane $\Im\rho<\tfrac{1}{2}$ at $\rho=-\i v_{\beta}$, $v_{\beta}>-\tfrac{1}{2}$.

\begin{defn}\label{transform}
For $h\in H_{\sigma,\delta}$, $\beta\in\RR$, $k\in\NN$ and $\nu$ as above, we define the following integral transform of $h$
\begin{equation}\label{trans}
g_{\beta,k}(t)=\frac{(-1)^{k}}{2\pi\i k}\int_{-\i\nu-\infty}^{-\i\nu+\infty}\frac{h'(\rho)e^{-\i\rho t}d\rho}{(1+m_{\Gamma}\beta\psi(\tfrac{1}{2}+\i\rho))^{k}}.
\end{equation}
\end{defn}

Selberg's trace formula relates the trace of the Laplacian to a sum over periodic orbits on $\Gamma\backslash\HH$. Our trace formula relates the difference between the traces of $-\Delta_{\varphi(\alpha)}$ and $-\Delta$ to so-called diffractive orbits. By a diffractive orbit associated with a group element $\gamma\in\Gamma$, which does not fix $z_{0}$, we mean the following. We consider the unique geodesic which connects $z_{0}$ to $\gamma z_{0}$ in $\HH$ and project it on to the surface $\Gamma\backslash\HH$. We hence obtain an orbit which starts and returns to $z_{0}$, which is, however, not necessarily a periodic orbit. We denote the length $d(\gamma z_{0},z_{0})$ of such an orbit by $l_{\gamma,z_{0}}$.

The following trace formula for compact surfaces is our main result. We prove it in section 5.
\begin{thm}\label{thm1}
Suppose $\scrM$ is a compact hyperbolic surface. Fix a point $z_{0}\in\Gamma\backslash\HH$ and $\alpha\in\RR\setminus\lbrace-1/c_{0}\rbrace\cup\lbrace\pm\infty\rbrace$, where $c_{0}\neq0$ is a certain real constant defined in \eqref{const0}. Let $-\Delta\varphi_{j}=\lambda_{j}\varphi_{j}$, and  $-\Delta_{\varphi(\alpha)}\varphi^{\alpha}_{j}=\lambda_{j}^{\alpha}\varphi^{\alpha}_{j}$, where $\lambda_{j}=\tfrac{1}{4}+\rho_{j}^{2}$, and $\lambda_{j}^{\alpha}=\tfrac{1}{4}+(\rho_{j}^{\alpha})^{2}$. For a certain positive constant $C(\Gamma,\alpha,z_{0})$ choose $\sigma>\tfrac{1}{2}$ such that 
\begin{equation}\label{condition}
(\tfrac{1}{2}+\sigma)^{1/2}\log(\tfrac{1}{2}+\sigma)>C(\Gamma,\alpha,z_{0})
\end{equation}
and let 
\begin{equation}\label{renorm}
\beta=\beta(\alpha)=
\begin{cases}
\alpha/(1+c_{0}\alpha),\quad \alpha\in\RR\setminus\lbrace-1/c_{0}\rbrace\\
1/c_{0},\quad \alpha=\pm\infty.
\end{cases}
\end{equation} 
Let $\nu=\nu(\alpha)$ be defined as in Definition \ref{transform}. For any $\delta>0$ and $h\in H_{\sigma,\delta}$ we have the identity
\begin{equation}
\begin{split}
&\sum_{j=0}^{\infty}\lbrace h(\rho^{\alpha}_{j})-h(\rho_{j})\rbrace\\
=\;&\frac{1}{2\pi}\int_{-\i\nu-\infty}^{-\i\nu+\infty}h(\rho)\frac{m_{\Gamma}\beta\psi'(\tfrac{1}{2}+\i\rho)}{1+m_{\Gamma}\beta\psi(\tfrac{1}{2}+\i\rho)}d\rho\\
&+\sum_{k=1}^{\infty}\beta^{k}\sum_{\gamma_{1},\cdots,\gamma_{k}\in\Gamma\backslash\scrI}
\int_{l_{\gamma_{1},z_{0}}}^{\infty}\cdots\int_{l_{\gamma_{k},z_{0}}}^{\infty}\frac{g_{\beta,k}(t_{1}+\cdots+t_{n})\prod_{n=1}^{k}dt_{n}}{\prod_{n=1}^{k}\sqrt{\cosh t_{n}-\cosh l_{\gamma_{n},z_{0}}}}.
\end{split}
\end{equation}
\end{thm}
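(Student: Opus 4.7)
By the remark following Definition~\ref{transform}, it suffices to prove the identity with the left-hand side replaced by $\sum_j\{h(r_j^\alpha)-h(r_j)\}$. The new eigenvalues $\{r_j^\alpha\}$ are (by Proposition~3 of \S3) precisely the zeros of a meromorphic secular function $F(\rho)$ whose poles are the distinct Laplace eigenvalues $\{r_j\}$. Using $h$ even and its decay (iii), a standard argument-principle calculation on a rectangular contour whose horizontal sides at $\Im\rho=T$ (with $T$ large) and $\Im\rho=-\nu$ can be pushed appropriately yields
\begin{equation*}
\sum_{j=0}^\infty\bigl\{h(r_j^\alpha)-h(r_j)\bigr\}
=\frac{1}{2\pi\i}\int_{-\i\nu-\infty}^{-\i\nu+\infty} h(\rho)\,\frac{F'(\rho)}{F(\rho)}\,d\rho.
\end{equation*}
The shift $\Im\rho=-\nu$ is chosen to pass below the (at most one) imaginary zero $-\i v_\beta$ of the $F_0$-factor introduced in Step~2.

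\textbf{Step 2: Decompose the secular function.} The explicit form of $F$ (obtained from the regularised resolvent kernel at $z_0$ together with the Selberg-type expansion of the automorphic Green's function on $\Gamma\backslash\HH$) splits off the stabiliser contribution from the off-diagonal one:
\begin{equation*}
F(\rho)=F_0(\rho)+\beta\,H(\rho),\qquad F_0(\rho):=1+m_\Gamma\beta\,\psi(\tfrac{1}{2}+\i\rho),
\end{equation*}
\begin{equation*}
H(\rho)=\sum_{\gamma\in\Gamma\backslash\scrI} K(\rho;l_{\gamma,z_0}),\qquad K(\rho;l)=\int_l^\infty\frac{e^{-\i\rho t}\,dt}{\sqrt{\cosh t-\cosh l}},
\end{equation*}
the latter being the Mehler--Fock representation of the hyperbolic resolvent kernel at point-pair distance $l$.

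\textbf{Step 3: Geometric expansion and the identity term.} Writing
\begin{equation*}
\frac{F'}{F}=\frac{F_0'}{F_0}+\frac{d}{d\rho}\log\!\Bigl(1+\frac{\beta H}{F_0}\Bigr)=\frac{F_0'}{F_0}+\sum_{k=1}^\infty\frac{(-1)^{k+1}}{k}\frac{d}{d\rho}\!\left(\frac{\beta H}{F_0}\right)^{\!k},
\end{equation*}
substituting into the contour integral, and integrating by parts once in each series term (to move $d/d\rho$ from $(\beta H/F_0)^k$ onto $h$), the $F_0'/F_0$ contribution becomes $\tfrac{1}{2\pi\i}\int h(\rho)F_0'/F_0\,d\rho$; since $F_0'(\rho)=\i m_\Gamma\beta\,\psi'(\tfrac{1}{2}+\i\rho)$, this is precisely the identity term displayed in the statement. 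The remaining terms read
\begin{equation*}
\sum_{k=1}^\infty\frac{(-1)^k\beta^k}{2\pi\i k}\int_{-\i\nu-\infty}^{-\i\nu+\infty} h'(\rho)\,\frac{H(\rho)^k}{F_0(\rho)^k}\,d\rho.
\end{equation*}

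\textbf{Step 4: Multiplying out $H^k$ and recognising $g_{\beta,k}$.} Expanding $H(\rho)^k$ as a $k$-fold sum over $(\gamma_1,\ldots,\gamma_k)\in(\Gamma\backslash\scrI)^k$ and inserting the Mehler--Fock integral for each factor $K(\rho;l_{\gamma_n,z_0})$ gives
\begin{equation*}
H(\rho)^k=\sum_{\gamma_1,\ldots,\gamma_k}\int_{l_{\gamma_1,z_0}}^\infty\!\!\cdots\!\int_{l_{\gamma_k,z_0}}^\infty \frac{e^{-\i\rho(t_1+\cdots+t_k)}\prod_n dt_n}{\prod_n\sqrt{\cosh t_n-\cosh l_{\gamma_n,z_0}}}.
\end{equation*}
Exchanging the order of integration and summation, the inner $\rho$-integral is exactly $g_{\beta,k}(t_1+\cdots+t_k)$ by Definition~\ref{transform}, so the $k$-th term matches the corresponding $k$-th summand in the asserted diffractive expansion.

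\textbf{Main obstacle.} The bulk of the work is analytic rather than algebraic: I must justify (i) the contour deformation used in Step~1 past zeros/poles on or near the shifted line, (ii) the termwise integration by parts, and (iii) the swap of $k$-summation with $\rho$-integration and with the $\gamma$-summations. All three reduce to obtaining a uniform estimate $\|\beta H/F_0\|_\infty<1$ on the contour $\Im\rho=-\nu$, together with integrable tails. This is precisely what condition \eqref{condition} encodes: for $\sigma$ large enough, $|F_0(\rho)|$ is bounded below by a multiple of $\log|\Re\rho|$ via the asymptotics of $\psi$, while the sum $H(\rho)$ is controlled in terms of the counting function of $l_{\gamma,z_0}$ (a hyperbolic lattice point count) and decays sufficiently to produce a contraction factor of the form $C(\Gamma,\alpha,z_0)/((\tfrac{1}{2}+\sigma)^{1/2}\log(\tfrac{1}{2}+\sigma))<1$. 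Once this bound is in place, Fubini and dominated convergence take care of the rest.
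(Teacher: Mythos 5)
Your algebraic skeleton (argument principle, split $F=F_0+\beta H$, $\log$-expansion, integration by parts, Mehler--Fock representation, Fubini) matches the paper's, but the proposal contains two substantial errors/gaps precisely where the paper does the hard work.

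\textbf{The contour level.} You perform both the argument-principle identity and the $\log$-expansion on the line $\Im\rho=-\nu$. Neither step can be carried out there. First, $\nu$ is defined only by the location of the single zero $-\i v_\beta$ of $F_0$ and can be arbitrarily small (indeed $\nu=0$ if $F_0$ has no zero in $\Im\rho<0$), whereas $F'/F$ has poles at the imaginary $\rho_0^\alpha$ (the lowest new eigenvalue has $\lambda_0^\alpha<0$) that can lie below $-\i\nu$; the bottom edge of the truncated rectangle therefore sits at $-\i\sigma$ with $\sigma>|\Im\rho_0^\alpha|$, not at $-\i\nu$. Second, and more seriously, the bound $\|\beta H/F_0\|_\infty<1$ that justifies the power series for $\log(1+\beta H/F_0)$ is established by the paper on the line $\Im\rho=-\tilde\sigma$ with $\tilde\sigma$ chosen \emph{large} (see Lemma~\ref{Greenbound} and \eqref{bigbound}--\eqref{large1}); condition \eqref{condition} is a condition on $\sigma$, i.e.\ on the strip width of the test function, not on $\nu$. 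The shift to $-\i\nu$ is performed \emph{termwise after} expansion, in each $g_{\beta,k}$ integrand $h'(\rho)e^{-\i\rho t}/F_0(\rho)^k$, which has no singularities other than the zero of $F_0$ and hence moves freely. Your claim that the contraction is needed at $-\i\nu$ conflates these two contours.

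\textbf{The boundary-term limit.} In Step~1 you appeal to a ``standard argument-principle calculation'' with the vertical edges of the rectangle ``pushed appropriately.'' There is nothing standard here: $S_{\alpha,z_0}$ has interlacing poles at $\rho_j$ and zeros at $r_j^\alpha$ which, by Weyl's law, have mean spacing tending to zero, so $S'_{\alpha,z_0}/S_{\alpha,z_0}$ admits no a~priori bound on a generic vertical segment $\{\Re\rho=T,\ |\Im\rho|\le\sigma\}$. The paper devotes the bulk of Section~5 (Propositions~\ref{compbound} and~\ref{polybound}, Lemmas~\ref{spacing} and~\ref{testf}, Theorem~\ref{van}, and the whole Appendix) to this: one must first extract via a pigeonhole/Weyl-law argument a subsequence $\{T_N\}$ away from the poles and zeros on which $|S_{\alpha,z_0}|$ is polynomially bounded, then construct an auxiliary test function $h_\epsilon$ whose $\Re h_\epsilon'$ is positive and has a controlled lower bound on the line segments $[T_{N(l)}-\i\sigma,T_{N(l)}]$, compare against the convergent pre-trace limit to bound $\int|\log|S_{\alpha,z_0}||$, and only then conclude that the vertical boundary integrals vanish along that subsequence. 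Your ``Main obstacle'' paragraph correctly names the problem but offers no mechanism to solve it; as stated, the proposal has a genuine gap at the point where the proof is actually hard.
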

\begin{remark}
The constant $C(\Gamma,\alpha,z_{0})$ arises naturally in the proof. The condition \eqref{condition} ensures that the series over the $k$-tuples of diffractive orbits converges absolutely.
\end{remark}
\begin{remark}
The relation \eqref{renorm} corresponds to a renormalisation of the coupling constant $\alpha$. For a physical interpretation of this renormalisation see for instance \cite{Bo}, bottom paragraph on p. 3.
\end{remark}

\section{Background}

\subsection{Hyperbolic surfaces}
We define the upper half-plane
\begin{equation}
\HH=\left\{x+\i y\in\CC\mid y>0\right\}
\end{equation}
with Riemannian metric $ds^{2}=y^{-2}(dx^{2}+dy^{2})$ and volume element $d\mu(z)=y^{-2}dx\,dy$, where $z=x+\i y$. The geodesic distance on $\HH$ is given by $d(\cdot,\cdot):\HH\times\HH\to\RR_{+}$, where
\begin{equation}
\cosh d(z,w)=1+\frac{|z-w|^{2}}{2\Im z \Im w}.
\end{equation}
The Laplacian on $\HH$ is of the form
\begin{equation}
\Delta=y^{2}\left(\frac{\partial^{2}}{\partial x^{2}}+\frac{\partial^{2}}{\partial y^{2}}\right).
\end{equation}
Define
\begin{equation}
\SL(2,\RR)=\left\{
\begin{pmatrix}
   a & b\\
   c & d
   \end{pmatrix}\Bigg|\quad
   a, b, c, d \in\RR,\quad ad-bc=1
\right\},
\end{equation}
and
\begin{equation}
\PSL(2,\RR)=\SL(2,\RR)/\lbrace\id,-\id\rbrace.
\end{equation}
The orientation-preserving isometries of $\HH$ are the fractional linear transformations
\begin{equation}
	 z\to\frac{az+b}{cz+d},\qquad
	 \begin{pmatrix}
   a & b\\
   c & d
   \end{pmatrix}
   \in\PSL(2,\RR).
\end{equation}

A hyperbolic Riemann surface $\scrM$ of finite volume can be represented as a quotient $\Gamma\backslash\HH$, where $\Gamma\subset\PSL(2,\RR)$ is a Fuchsian group of the first kind. This means that for any $z\in\HH$ the orbit $\Gamma z$ has no limit point in $\HH$. 

\subsection{Automorphic functions}

We define the space of automorphic functions 
\begin{equation}
\scrA(\Gamma\backslash\HH)=\left\{f:\HH\to\CC\mid\forall\gamma\in\Gamma:f(\gamma z)=f(z)\right\}.
\end{equation}
Let $\scrF$ be a connected fundamental domain for $\Gamma$. Let $f\in\scrA(\Gamma\backslash\HH)$ be a measurable function. We introduce the norm
\begin{equation}
\left\|f\right\|=\left(\int_{\scrF}|f(z)|^{2}d\mu(z)\right)^{1/2}
\end{equation}
and the space of square-integrable functions
\begin{equation}
L^{2}(\Gamma\backslash\HH)=\left\{f\in\scrA(\Gamma\backslash\HH)\mid\left\|f\right\|<+\infty\right\}.
\end{equation}
Let $f,g\in\scrA(\Gamma\backslash\HH)$ be measurable functions. We define an inner product on $L^{2}(\Gamma\backslash\HH)$ by
\begin{equation}
(f,g)=\int_{\scrF}f(z)\overline{g(z)}d\mu(z).
\end{equation}

An important example of an automorphic function is the automorphic Green function $G^{\Gamma}_{s}(\cdot,z_{0})$ which is defined by the method of images for $\Re s>1$
\begin{equation}
G^{\Gamma}_{s}(z,z_{0})=\sum_{\gamma\in\Gamma}G_{s}(z,\gamma z_{0}),\qquad z\neq z_{0} \mod \Gamma,
\end{equation}
where $G_{s}(\cdot,w)$ denotes the Green function on $\HH$. The automorphic Green function satisfies
\begin{equation}
(\Delta+s(1-s))G^{\Gamma}_{s}(\cdot,z_{0})=\delta_{z_{0}}.
\end{equation}
The function $G^{\Gamma}_{t}(\cdot,z_{0})$ is automorphic in the sense that for any $z\notin\Gamma z_{0}$
\begin{equation}
G^{\Gamma}_{t}(\gamma z,z_{0})=G^{\Gamma}_{t}(z,z_{0}).
\end{equation}
Since $\Gamma\backslash\HH$ is compact, we have
\begin{equation}
G^{\Gamma}_{1-s}(z,z_{0})=G^{\Gamma}_{s}(z,z_{0})
\end{equation}
which follows from the spectral expansion (in the distributional sense)
\begin{equation}\label{greenspec}
G^{\Gamma}_{s}(z,z_{0})=\sum_{j=0}^{\infty}\frac{\varphi_{j}(z)\overline{\varphi_{j}(z_{0})}}{\lambda_{j}-s(1-s)},
\qquad z\neq z_{0} \mod \Gamma,
\end{equation}
where $\lbrace\lambda_{j}\rbrace_{j=0}^{\infty}$ denotes the eigenvalues of the Laplacian. 

\subsection{Delta potentials}

Let $z_{0}\in\scrM$. We consider the one-parameter family of rank-one perturbations $\Delta_{\alpha,z_{0}}=\Delta+\alpha\delta_{z_{0}}(\delta_{z_{0}},\cdot)$, $\alpha\in\RR$, where $\delta_{z_{0}}$ denotes the Dirac functional at $z_{0}$. It is known (cf. \cite{CdV}, section 1) that the formal operator $-\Delta_{\alpha,z_{0}}$ can be realised as a self-adjoint extension of $-\Delta$ defined on $D_{0}=C^{\infty}_0(\scrM\setminus\lbrace z_0\rbrace)$.

It can easily be checked that $-\Delta:D_{0}\to L^{2}(\scrM)$ is symmetric with respect to $(\cdot,\cdot)$. Also $-\Delta_{\alpha,z_{0}}|_{D_{0}}=-\Delta|_{D_{0}}$ for any $\alpha\in\RR$. It is well-known (cf. for instance \cite{CdV}) that in dimension 2 the symmetric operator $-\Delta|_{D_{0}}$ has deficiency indices $(1,1)$. The deficiency elements are the automorphic Green function $\G^{\Gamma}_{\,t}(z,z_{0})$ and its conjugate $\G^{\Gamma}_{\,\bar{t}}(z,z_{0})$, where $t(1-t)=\i$ and $\Re t>\tfrac{1}{2}$. The Green function $\G^{\Gamma}_{\,t}(\cdot,z_{0})$ uniquely satisfies
\begin{equation}
(\Delta+t(1-t))G^{\Gamma}_{\,t}(\cdot,z_{0})=\delta_{z_{0}},
\end{equation}
and has the asymptotics
\begin{equation}\label{GAsympt}
G^{\Gamma}_{\,t}(z,z_{0})=\frac{m}{2\pi}\log d(z,z_{0})+\frac{\gamma}{2\pi}+\psi(t)+o(1),\qquad z\to z_{0},
\end{equation}
where $\gamma$ is Euler's constant, $m=\ord(z_{0})$ if $z_{0}$ corresponds to an elliptic fixed point in $\HH$, and otherwise $m=1$.

In what follows we will require various closely related types of Laplacians. In order not to confuse the reader we summarise the various notations.
\begin{notn} 
We distinguish between the following notations.
\begin{enumerate}
\item[(i)] By $\Delta_{z_0}$ we mean the Laplacian $\Delta$ on $D_0$.
\item[(ii)] By $\Delta_{z_0}^*$ we mean the adjoint of $\Delta_{z_0}$ and we denote its domain by $\Dom(\Delta_{z_0}^*)$.
\item[(iii)] We denote by $\overline{\Delta_{z_0}}$ the minimal closed extension of $\Delta_{z_0}$, and we denote its domain by $\overline{D_0}$.
\item[(iv)] By $\Delta_\varphi$, $\varphi\in(-\pi,\pi)$, we mean the one parameter family of self-adjoint extensions of $\Delta_{z_0}$.
\item[(v)] By $\widetilde{\Delta_{z_0}}$ we denote the Laplacian in the distributional sense on $\scrM\setminus\lbrace z_0\rbrace$.
\end{enumerate}
\end{notn}

Denote  
\begin{equation}
H^2(\scrM\setminus\lbrace z_0\rbrace)=\lbrace f\in L^2(\scrM)\mid\widetilde{\Delta_{z_0}}f\in L^2(\scrM)\rbrace.
\end{equation}
With respect to the graph inner product $\left\langle\cdot,\cdot\right\rangle=(\cdot,\cdot)+(\Delta_{z_0}^*\cdot,\Delta_{z_0}^{*}\cdot)$ we have the decomposition (cf. \cite{CdV}, section 1, second paragraph, p. 277)
\begin{equation}
\Dom(\Delta_{z_0}^*)=\overline{D_{0}}\oplus\scrL\lbrace G^{\Gamma}_{\,t}(\cdot,z_{0})\rbrace\oplus\scrL\lbrace G^{\Gamma}_{\,\bar{t}}(\cdot,z_{0})\rbrace
\end{equation}
and it can be seen (cf. \cite{CdV}, Thm. 1, p. 277/8) that $\Dom(\Delta_{z_0}^*)=H^2(\scrM\setminus\lbrace z_0\rbrace)$.
Define the subspace $$\D_{\varphi}\subset\overline{D_{0}}\oplus\scrL\lbrace G^{\Gamma}_{\,t}\left(\cdot,z_{0}\right)\rbrace\oplus\scrL\lbrace G^{\Gamma}_{\,\bar{t}}\left(\cdot,z_{0}\right)\rbrace$$ by
\begin{equation}
D_{\varphi}=\left\{g+cG^{\Gamma}_{\,t}(\cdot,z_{0})+c\,e^{\i\varphi}G^{\Gamma}_{\,\bar{t}}(\cdot,z_{0})|g\in D_{0}, c\in\CC\right\}
\end{equation}
with $\varphi\in(-\pi,\pi)$. This means any $f\in D_{\varphi}$ has, as $z\to z_0$, the asymptotic behaviour
\begin{equation}
f(z)=c(f)\left\{\frac{m}{2\pi}\log d(z,z_{0})+\frac{\gamma}{2\pi}+\Re\psi(t)+k\tan\frac{\varphi}{2}\right\}+o(1),
\end{equation}
where $c(f)\in\CC$ and $$k=\lim_{z\to z_0}\Im G^\Gamma_{t}(z,z_0).$$ Let us introduce the function
\begin{equation}
A(s,t)=\frac{1}{2}\lim_{z\to z_{0}}(G^{\Gamma}_{s}(z,z_{0})-G^{\Gamma}_{t}(z,z_{0}))
\end{equation}
which is meromorphic in $s$. The limit clearly exists, since the logarithmic singularities at $z_{0}$ cancel. The operator $-\Delta_{z_0}$ admits a one parameter family of self-adjoint extensions $\left\{-\Delta_{\varphi}\right\}$, $\varphi\in(-\pi,\pi)$. The operator $-\Delta_{\varphi}: D_{\varphi}\to L^{2}(\Gamma\backslash\HH)$ is defined (cf. \cite{Si}, Section X.1) by
\begin{equation}
-\Delta_{\varphi}f=-\Delta g+c\,t(1-t)G^{\Gamma}_{\,t}(z,z_{0})
+c\,\bar{t}(1-\bar{t})e^{\i\varphi}G^{\Gamma}_{\,\bar{t}}(z,z_{0}).
\end{equation}
It is self-adjoint, see for instance \cite{Si}, Theorem X.2 in Section X.1, p. 140.

It is possible to make an explicit connection between delta potentials and the self-adjoint extensions by relating the coupling constant $\alpha\in\RR$ to the parameter $\varphi\in(-\pi,\pi)$ of the corresponding self-adjoint extension (cf. \cite{Al}, p. 34. Thm. 1.3.2), which amounts to the identity 
\begin{equation}\label{ext}
-2\i\alpha A(t,\bar{t})=\cot\left(\frac{\varphi}{2}\right).
\end{equation}

\section{The spectral function}

In this section we determine a meromorphic function $S_{\alpha,z_{0}}\left(s\right)$ which contains in its zeros and poles all the information about the perturbed and unperturbed discrete spectrum. We begin with an analogue of Hilbert's formula for the iterated resolvent in terms of Green functions.
\begin{lem}\label{Hilbert}
Let $\i=t\left(1-t\right)$, $\lambda=s\left(1-s\right)$. The following identity holds
\begin{equation}
\left(\Delta+\lambda\right)\left(G^{\Gamma}_{s}\left(\cdot,z_{0}\right)-G^{\Gamma}_{t}\left(\cdot,z_{0}\right)\right)
=\left(\i-\lambda\right)G^{\Gamma}_{t}\left(\cdot,z_{0}\right).
\end{equation}
\end{lem}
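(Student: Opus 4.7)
The lemma is a direct consequence of the defining equations for the Green functions, and the proof is essentially a one-line manipulation — an analogue of the classical first resolvent identity.

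The plan is as follows. From the discussion in Section 3.2 (and the definition of $G^\Gamma_t$ in Section 3.3), the automorphic Green function satisfies
\begin{equation*}
(\Delta + s(1-s)) G^{\Gamma}_{s}(\cdot,z_0) = \delta_{z_0}, \qquad (\Delta + t(1-t)) G^{\Gamma}_{t}(\cdot,z_0) = \delta_{z_0},
\end{equation*}
as distributions on $\scrM$. With the shorthand $\lambda = s(1-s)$ and $\i = t(1-t)$, the first relation gives $(\Delta + \lambda)G^{\Gamma}_s(\cdot,z_0) = \delta_{z_0}$ directly, while for the second I would write
\begin{equation*}
(\Delta + \lambda) G^{\Gamma}_{t}(\cdot,z_0) = (\Delta + \i) G^{\Gamma}_{t}(\cdot,z_0) + (\lambda - \i)G^{\Gamma}_{t}(\cdot,z_0) = \delta_{z_0} + (\lambda - \i) G^{\Gamma}_{t}(\cdot,z_0).
\end{equation*}
Subtracting the second identity from the first, the delta functions cancel and one obtains exactly
\begin{equation*}
(\Delta + \lambda)\bigl(G^{\Gamma}_{s}(\cdot,z_0) - G^{\Gamma}_{t}(\cdot,z_0)\bigr) = -(\lambda - \i) G^{\Gamma}_{t}(\cdot,z_0) = (\i - \lambda) G^{\Gamma}_{t}(\cdot,z_0),
\end{equation*}
which is the claimed identity.

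There is essentially no obstacle here; the only point worth remarking is the regularity. One should note that the cancellation of delta functions on the left is consistent with the fact, already observed in the paragraph defining $A(s,t)$ in Section 3.3, that the logarithmic singularities of $G^\Gamma_s$ and $G^\Gamma_t$ at $z_0$ are identical, so the difference $G^{\Gamma}_{s}(\cdot,z_0) - G^{\Gamma}_{t}(\cdot,z_0)$ is smooth at $z_0$ and the identity holds in the classical (pointwise) sense on all of $\scrM$, and not merely away from $z_0$. No meromorphic continuation arguments are needed: the relation is valid wherever both $G^{\Gamma}_s$ and $G^{\Gamma}_t$ are defined, i.e. for $s, t$ avoiding the set of values with $s(1-s)$ or $t(1-t)$ in the Laplace spectrum, by the spectral expansion \eqref{greenspec}.
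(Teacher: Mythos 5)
Your proof is correct and follows essentially the same one-line manipulation as the paper: both start from the defining equations $(\Delta+\lambda)G^\Gamma_s=\delta_{z_0}=(\Delta+\i)G^\Gamma_t$ and rewrite $(\Delta+\i)G^\Gamma_t=(\Delta+\lambda)G^\Gamma_t+(\i-\lambda)G^\Gamma_t$ before subtracting. The regularity remark you add about the cancelling logarithmic singularities is a sensible clarification but not part of the paper's (very terse) proof.
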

\begin{proof}
We have the following chain of identities straight from the definition of the Green function,
\begin{equation}
\begin{split}
\left(\Delta+\lambda\right)G^{\Gamma}_{s}(\cdot,z_{0})=\,&\delta_{z_{0}}=(\Delta+\i)G^{\Gamma}_{t}(\cdot,z_{0})\\
=\,&(\Delta+\lambda)G^{\Gamma}_{t}(\cdot,z_{0})+(\i-\lambda)G^{\Gamma}_{t}(\cdot,z_{0}).
\end{split}
\end{equation}
\end{proof}

For $\Re s>1$ we introduce the function
\begin{equation}
S_{\alpha,z_{0}}(s)=\alpha^{-1}+\lim_{z\to z_{0}}(G^{\Gamma}_{s}(z,z_{0})-\tfrac{1}{2}G^{\Gamma}_{t}(z,z_{0})-\tfrac{1}{2}G^{\Gamma}_{\bar{t}}(z,z_{0})),
\end{equation}
which is well defined by the method of images since the sum over the group elements is well known to converge absolutely for $\Re s>1$. 

Next we state some important properties of $S_{\alpha,z_{0}}(s)$. In the previous section we saw (cf. section 2, p. 3) that the discrete spectrum of $-\Delta_\varphi$ consists of two parts: degenerate eigenvalues of the Laplacian (if any) that are inherited by $-\Delta_\varphi$, and new eigenvalues. We refer to the latter part of the discrete spectrum of $-\Delta_\varphi$ as the new part. $S_{\alpha,z_{0}}(s)$ contains information about these eigenvalues in form of its zeros. The following Proposition links the new eigenvalues $\lbrace\lambda_j^\alpha\rbrace$ to the zeros of $S_{\alpha,z_0}$. 
\begin{prop}\label{eigencond}
$\lambda=s(1-s)$ is in the new part of the discrete spectrum of $-\Delta_{\varphi}$ if, and only if, $S_{\alpha,z_{0}}(s)=0$. The corresponding eigenfunctions are given by automorphic Green functions $G^{\Gamma}_{s}(\cdot,z_{0})$. 
\begin{remark}
We remark that the proof makes use only of the self-adjoint extension theory as presented in \cite{Si}. It is possible to give an analogous proof which uses the fact that the self-adjoint extension $-\Delta_\varphi$ is simply the Laplacian in the distributional sense acting on functions with the asymptotic behaviour \eqref{GAsympt}. Hence looking for an eigenfunction of $-\Delta_\varphi$ with eigenvalue $\lambda$ simply means finding a non-trivial element of $\ker(\Delta^*+\lambda)$ with the asymptotic behaviour \eqref{GAsympt}.
\end{remark}
\end{prop}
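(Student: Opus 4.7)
The plan is to prove both implications by using the direct-sum decomposition of $\Dom(\Delta_{z_0}^*)$ together with Lemma \ref{Hilbert}, with the Green function itself furnishing the candidate eigenfunction.

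For the ``if'' direction, assume $S_{\alpha,z_0}(s)=0$ and let $\lambda=s(1-s)$. Since $\lambda$ will not coincide with a Laplacian eigenvalue in the context of a new eigenvalue, the spectral expansion \eqref{greenspec} shows $G^{\Gamma}_{s}(\cdot,z_0)\in L^2(\Gamma\backslash\HH)$; combined with $(\Delta+\lambda)G^{\Gamma}_{s}=\delta_{z_0}$ this places $G^{\Gamma}_{s}\in H^2(\scrM\setminus\{z_0\})=\Dom(\Delta_{z_0}^*)$. I would then decompose
$$G^{\Gamma}_{s}(\cdot,z_0)=r+A_{s}G^{\Gamma}_{t}(\cdot,z_0)+B_{s}G^{\Gamma}_{\bar t}(\cdot,z_0),\qquad r\in\overline{D_0},$$
with $A_{s},B_{s}\in\CC$ determined by matching the asymptotic \eqref{GAsympt} at $z_0$: comparing the logarithmic coefficients gives $A_{s}+B_{s}=1$, while comparing the bounded terms gives $\psi(s)=A_s\psi(t)+B_s\psi(\bar t)$. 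Membership in $D_\varphi$ is the single further constraint $B_{s}=A_{s}e^{\i\varphi}$. Eliminating $A_s,B_s$ and invoking the correspondence \eqref{ext} between $\alpha$ and $\varphi$, this constraint reduces to the identity $\alpha^{-1}+\psi(s)-\tfrac{1}{2}(\psi(t)+\psi(\bar t))=0$, which is $S_{\alpha,z_0}(s)=0$ by the definition of $S_{\alpha,z_0}$.

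Once $G^{\Gamma}_{s}\in D_\varphi$ is established, the eigenvalue equation is a direct computation. By the definition of $-\Delta_\varphi$ applied to the decomposition, together with $t(1-t)=\i$ and $\bar t(1-\bar t)=-\i$,
$$-\Delta_\varphi G^{\Gamma}_{s}=-\Delta r+\i A_{s}G^{\Gamma}_{t}-\i B_{s}G^{\Gamma}_{\bar t}.$$
Lemma \ref{Hilbert}, applied separately to the pairs $(s,t)$ and $(s,\bar t)$, rewrites $-\Delta r$ as a combination of $G^{\Gamma}_{s},G^{\Gamma}_{t},G^{\Gamma}_{\bar t}$; the $\delta_{z_0}$ contributions cancel exactly because $A_{s}+B_{s}=1$, and what remains collapses to $\lambda G^{\Gamma}_{s}$.

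For the ``only if'' direction, take a new eigenvalue $\lambda=s(1-s)$ with eigenfunction $f=g+c(G^{\Gamma}_{t}+e^{\i\varphi}G^{\Gamma}_{\bar t})\in D_\varphi$, $c\ne 0$ (the nonvanishing of $c$ is what makes $\lambda$ \emph{new} rather than inherited). Substituting into $-\Delta_\varphi f=\lambda f$ yields
$$(\Delta+\lambda)g=-c(\lambda-\i)G^{\Gamma}_{t}-ce^{\i\varphi}(\lambda+\i)G^{\Gamma}_{\bar t}.$$
Lemma \ref{Hilbert} rewrites the right-hand side as $(\Delta+\lambda)\bigl[c(G^{\Gamma}_{s}-G^{\Gamma}_{t})+ce^{\i\varphi}(G^{\Gamma}_{s}-G^{\Gamma}_{\bar t})\bigr]$, so $(\Delta+\lambda)\bigl[f-c(1+e^{\i\varphi})G^{\Gamma}_{s}\bigr]=0$ in the distributional sense on $\scrM\setminus\{z_0\}$. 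Since the bracketed quantity is $L^2$ and $\lambda$ is not in the Laplacian regular spectrum by the ``new'' hypothesis, it must vanish; thus $f$ is a scalar multiple of $G^{\Gamma}_{s}$. Membership $G^{\Gamma}_{s}\in D_\varphi$ then forces $S_{\alpha,z_0}(s)=0$ by the computation of the first direction.

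The main obstacle is the asymptotic bookkeeping at $z_0$ — carefully translating the algebraic condition $B_s/A_s=e^{\i\varphi}$ into the vanishing of $S_{\alpha,z_0}(s)$, using \eqref{GAsympt} and \eqref{ext} together with the trigonometric identity $\i(1-e^{\i\varphi})=(1+e^{\i\varphi})\tan(\varphi/2)$ that mediates between the real/imaginary decomposition of $\psi(t)$ and the boundary form prescribed by $D_\varphi$. All remaining steps are routine manipulations combining the decomposition of $\Dom(\Delta_{z_0}^*)$ with Lemma \ref{Hilbert}.
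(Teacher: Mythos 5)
Your proposal is correct, and it rests on exactly the same ingredients as the paper's proof: Lemma~\ref{Hilbert}, the von Neumann decomposition of $\Dom(\Delta_{z_0}^*)$, and the correspondence \eqref{ext} between $\alpha$ and $\varphi$. In the ``if'' direction you recover the paper's decomposition of $G^\Gamma_s$ by matching asymptotics at $z_0$ (yielding $A_s+B_s=1$ and $A_s=1/(1+e^{\i\varphi})$, $B_s=e^{\i\varphi}/(1+e^{\i\varphi})$); the paper simply writes that decomposition down and verifies that the remainder tends to $0$. In the ``only if'' direction your route is mildly different: rather than applying the resolvent and taking the limit $z\to z_0$ to land on $A(s,t)+e^{\i\varphi}A(s,\bar t)=0$ directly, you use Lemma~\ref{Hilbert} to produce an $L^2$ solution of $(\Delta+\lambda)u=0$ without singular part, conclude $f=c(1+e^{\i\varphi})G^\Gamma_s$, and then feed this back into the ``if'' equivalence. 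This is a valid and slightly cleaner variant that buys the stronger conclusion $f\propto G^\Gamma_s$ for free, rather than asserting it separately. Two small cautions: (i) your bounded-term matching implicitly uses that elements of $\overline{D_0}$ have $o(1)$ behaviour at $z_0$ (no constant term); this is indeed the characterization from \cite{CdV}, but should be cited explicitly since it is what makes $A_s,B_s$ determined. (ii) The ``bounded term'' of $G^\Gamma_s$ is not just $\psi(s)$ but also carries the sum $\sum_{\gamma\in\Gamma\backslash\scrI}G_s(z_0,\gamma z_0)$; your identity ``$\alpha^{-1}+\psi(s)-\tfrac12(\psi(t)+\psi(\bar t))=0$'' should carry those group-sum contributions as well, at which point it literally becomes $S_{\alpha,z_0}(s)=0$. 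This is a notational slip inherited from the abbreviated form of \eqref{GAsympt} and does not affect the logic.
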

\begin{proof}
Let $\lambda=s(1-s)$ and $\i=t(1-t)$. Assume that $f_{s}\in\D_{\varphi}\subset L^{2}(\scrM)$ is an eigenfunction of $-\Delta_{\varphi}$ with eigenvalue $\lambda$ and that $\lambda$ does not lie in the discrete spectrum of $\Delta$. By definition
\begin{equation}
(\Delta_{\varphi}+\lambda)f_{s}=0.
\end{equation}
We may write equivalently, using the decomposition of $D_{\varphi}$,
\begin{equation}
(\Delta+\lambda)g_{s}+c(\lambda-\i)G^{\Gamma}_{t}(\cdot,z_{0})+ce^{\i\varphi}(\lambda+\i)G^{\Gamma}_{\bar{t}}(\cdot,z_{0})=0.
\end{equation}
Applying the resolvent on both sides we get
\begin{equation}
g_{s}+c\frac{\lambda-\i}{\Delta+\lambda}G_{t}^{\Gamma}(\cdot,z_{0})+ce^{\i\varphi}\frac{\lambda+\i}{\Delta+\lambda}G_{t}^{\Gamma}(\cdot,z_{0})=0,
\end{equation}
and using Lemma \ref{Hilbert} we rewrite this as
\begin{equation}
g_{s}+c(G^{\Gamma}_{t}(\cdot,z_{0})-G^{\Gamma}_{s}(\cdot,z_{0}))
+ce^{\i\varphi}(G^{\Gamma}_{\bar{t}}(\cdot,z_{0})-G^{\Gamma}_{s}(\cdot,z_{0}))=0.
\end{equation}
We take the limit as $z\to z_{0}$ and obtain
\begin{equation}
cA(s,t)+ce^{\i\varphi}A(s,\bar{t})=0.
\end{equation}
At this point we can divide by $c$ since $c\neq0$. To see this suppose the contrary. It follows that $f_{s}=g_{s}\in D_{0}$. Therefore $0=(\Delta_{\varphi}+\lambda)f_{s}=(\Delta+\lambda)f_{s}$ which contradicts the assumption that $\lambda$ does not lie in the discrete spectrum of $\Delta$. After dividing we have
\begin{equation}\label{altform}
A(s,t)+e^{\i\varphi}A(s,\bar{t})=0,
\end{equation}
which we rewrite as
\begin{equation}\label{speceqn}
\lim_{z\to z_{0}}(G^{\Gamma}_{s}-\tfrac{1}{2}\lbrace G^{\Gamma}_{t}+G^{\Gamma}_{\bar{t}}\rbrace)(z,z_{0}) -\i\tan\frac{\varphi}{2}A(t,\bar{t})=0.
\end{equation}
Finally, using \eqref{ext},
\begin{equation}
\lim_{z\to z_{0}}(G^{\Gamma}_{s}-\tfrac{1}{2}\lbrace G^{\Gamma}_{t}+G^{\Gamma}_{\bar{t}}\rbrace)(z,z_{0})+\alpha^{-1}=0.
\end{equation}

Let us now assume that $S_{\alpha,z_{0}}(s)=0$. We claim that $G^{\Gamma}_{s}(\cdot,z_{0})$ is an eigenfunction of $\Delta_{\varphi}$. We have the decomposition
\begin{equation}
G^{\Gamma}_{s}(z,z_{0})
=\frac{1}{1+e^{\,\i\varphi}}\left\{S_{\,\alpha,\,z_{0}}(z,s)+G^{\Gamma}_{t}(z,z_{0})+e^{\,\i\varphi}G^{\Gamma}_{\bar{t}}(z,z_{0})\right\},
\end{equation}
where we have introduced
\begin{equation}
S_{\alpha,z_{0}}(z,s)=(G^{\Gamma}_{s}-G^{\Gamma}_{t})(z,z_{0})+e^{\,\i\varphi}(G^{\Gamma}_{s}-G^{\Gamma}_{\bar{t}})(z,z_{0}).
\end{equation}
We see from \eqref{altform} and \eqref{speceqn} that $\lim_{z\to z_{0}}S_{\alpha,z_{0}}(z,s)=S_{\alpha,z_{0}}(s)=0$. So $G^{\Gamma}_{s}(\cdot,z_{0})\in D_{\varphi}$. In the above decomposition we multiply through by $1+e^{\,\i\varphi}$ and by definition of $\Delta_{\varphi}$ we obtain
\begin{equation}
\begin{split}
&(1+e^{\,\i\varphi})(\Delta_{\varphi}+\lambda)G^{\Gamma}_{s}(\cdot,z_{0})\\
=&(\Delta+\lambda)S_{\alpha,z_{0}}(\cdot,s)+(\lambda-\i)G^{\Gamma}_{t}(\cdot,z_{0})+e^{\i\varphi}(\lambda+\i)G^{\Gamma}_{t}(\cdot,z_{0}).
\end{split}
\end{equation}
We apply Lemma \ref{Hilbert} to see
\begin{equation}
(\Delta+\lambda)S_{\alpha,z_{0}}(\cdot,s)=(\i-\lambda)G^{\Gamma}_{t}(\cdot,z_{0})+e^{\i\varphi}(-\i-\lambda)G^{\Gamma}_{t}(\cdot,z_{0})
\end{equation}
which implies
\begin{equation}
(1+e^{\,\i\varphi})(\Delta_{\varphi}+\lambda)G^{\Gamma}_{s}(\cdot,z_{0})=0.
\end{equation}
It follows $(\Delta_{\varphi}+\lambda)G^{\Gamma}_{s}(\cdot,z_{0})=0$ since $1+e^{\,\i\varphi}\neq0$. 
\end{proof}

Define
\begin{equation}
\scrI=\lbrace\gamma\in\Gamma\mid\gamma z_{0}=z_{0}\rbrace.
\end{equation}
$\scrI=\lbrace\id\rbrace$ unless $z_{0}$ is an elliptic fixed point in which case $\scrI$ is a finite cyclic group. We can write $S_{\alpha,z_{0}}(s)$ in a more convenient form if $\Re s>1$.
\begin{prop}\label{expression}
Let $\Re s>1$. The function $S_{\alpha,z_{0}}(s)$ can be written in the form
\begin{equation}\label{expr2}
S_{\alpha,z_{0}}(s)=\beta^{-1}+m\psi(s)+\sum_{\gamma\in\Gamma\setminus\scrI}G_{s}(z_{0},\gamma z_{0}),
\end{equation}
where $\beta=\beta(\alpha)$, $\psi(s)=\tfrac{1}{2\pi}\Gamma'(s)/\Gamma(s)$ and $m=|\scrI|$.
\end{prop}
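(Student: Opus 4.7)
I would proceed by direct computation from the definition of $S_{\alpha,z_0}(s)$, using the method of images to isolate the contribution of the stabilizer $\scrI$ and extract the $m\psi(s)$ term from the local singularity of the free Green function.

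For $\Re s > 1$ the automorphic Green function is given by the absolutely convergent series $G^\Gamma_s(z,z_0) = \sum_{\gamma \in \Gamma} G_s(z, \gamma z_0)$. Separating the stabilizer, which contributes $m = |\scrI|$ copies of the singular term, yields
$$G^\Gamma_s(z,z_0) = m\,G_s(z, z_0) + \sum_{\gamma \in \Gamma \setminus \scrI} G_s(z, \gamma z_0),$$
where the remaining sum is continuous at $z = z_0$ since $\gamma z_0 \neq z_0$ for $\gamma \notin \scrI$. Applying the analogous expansion for the parameters $t$ and $\bar t$, substituting into the definition of $S_{\alpha,z_0}(s)$, and passing to the limit splits the expression as
\begin{align*}
S_{\alpha,z_0}(s) - \alpha^{-1} &= m \lim_{z \to z_0}\bigl(G_s - \tfrac12 G_t - \tfrac12 G_{\bar t}\bigr)(z, z_0) \\
&\qquad + \sum_{\gamma \in \Gamma \setminus \scrI}\bigl(G_s - \tfrac12 G_t - \tfrac12 G_{\bar t}\bigr)(z_0, \gamma z_0).
\end{align*}

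The first limit is handled by the standard asymptotic of the free Green function on $\HH$: the logarithmic singularities cancel in the combination, and the identity $G_s(z,z_0) - G_t(z,z_0) \to \psi(s) - \psi(t)$ as $z \to z_0$ gives the value $\psi(s) - \tfrac12(\psi(t) + \psi(\bar t))$. For the second piece, absolute convergence for $\Re s > 1$ permits splitting into three separate sums. Rearranging one obtains
$$S_{\alpha,z_0}(s) = \alpha^{-1} + c_0 + m\psi(s) + \sum_{\gamma \in \Gamma \setminus \scrI} G_s(z_0, \gamma z_0),$$
where
$$c_0 := -\tfrac{m}{2}\bigl(\psi(t) + \psi(\bar t)\bigr) - \tfrac12 \sum_{\gamma \in \Gamma\setminus\scrI}\bigl(G_t + G_{\bar t}\bigr)(z_0, \gamma z_0)$$
is a real constant independent of $s$ — precisely the quantity referenced as \eqref{const0}. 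Since the renormalization \eqref{renorm} was defined so that $\beta^{-1} = \alpha^{-1} + c_0$, the identity \eqref{expr2} follows.

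The main technical point is verifying the free-space asymptotic $G_s - G_t \to \psi(s) - \psi(t)$ on $\HH$ with the paper's normalization $\psi = \frac{1}{2\pi}\Gamma'/\Gamma$; this is a routine but bookkeeping-heavy computation from the hypergeometric (or integral) representation of the free Green function. Alternatively one may bypass $G_s$ on $\HH$ altogether by applying \eqref{GAsympt} directly to the difference $G^\Gamma_s - G^\Gamma_t$, in which the $\frac{m}{2\pi}\log d$ and Euler-constant terms manifestly cancel. Everything else in the argument is rearrangement plus the standard absolute convergence of the method-of-images sums for $\Re s > 1$.
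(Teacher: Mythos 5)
Your proof is correct and takes essentially the same route as the paper: split off the $m$ copies of the free Green function coming from the stabilizer $\scrI$, use the asymptotic \eqref{GAsympt} to evaluate the limit of the local combination as $\psi(s)-\Re\psi(t)$, absorb the remaining $s$-independent constant into the coupling via the renormalisation $\alpha\mapsto\beta$, and note that absolute convergence for $\Re s>1$ justifies the rearrangement. One small bookkeeping caveat: the constant you define is the \emph{negative} of \eqref{const0} as printed in the paper (which sets $c_0=|\scrI|\Re\psi(t)+\Re\sum_{\gamma}G_t(z_0,\gamma z_0)$, so that $\beta^{-1}=\alpha^{-1}-c_0$); your sign convention actually matches the formula $\beta=\alpha/(1+c_0\alpha)$ in \eqref{renorm}, which is itself inconsistent with the paper's own proof of the proposition, so your computation is sound even though the identification "precisely the quantity referenced as \eqref{const0}" is off by a sign.
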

\begin{proof}
It can be seen from the asymptotics \eqref{GAsympt} that
\begin{equation}
\psi(s)-\psi(t)=\lim_{z\to z_{0}}(G_{s}(z,z_{0})-G_{t}(z,z_{0})).
\end{equation}
where $\psi=(2\pi)^{-1}\Gamma'/\Gamma$. From the definition of $S_{\alpha,z_{0}}(s)$ we have for $\Re s>1$
\begin{equation}
\begin{split}
S_{\alpha,z_{0}}(s)&\;=
\alpha^{-1}+\lim_{z\to z_{0}}(G^{\Gamma}_{s}(z,z_{0})-\tfrac{1}{2}G^{\Gamma}_{t}(z,z_{0})-\tfrac{1}{2}G^{\Gamma}_{\bar{t}}(z,z_{0}))\\
&\;=\alpha^{-1}+|\scrI|\psi(s)-|\scrI|\Re\psi(t)\\
&\qquad+\sum_{\gamma\in\Gamma\backslash\scrI}
\lbrace G_{s}(z_{0},\gamma z_{0})-\Re G_{t}(z_{0},\gamma z_{0})\rbrace
\end{split}
\end{equation}
and we let
\begin{equation}\label{const0}
c_{0}=|\scrI|\Re\psi(t)+\Re\sum_{\gamma\in\Gamma\setminus\scrI}G_{t}(z_{0},\gamma z_{0}).
\end{equation}
At this point we choose to reparametrise the coupling constant $\alpha$ (assuming $\alpha\neq-1/c_{0}$) according to
\begin{equation}
\alpha^{-1}-c_{0}=\beta^{-1}
\end{equation}
or 
\begin{equation}
\beta=\frac{\alpha}{1-\alpha c_{0}}.
\end{equation}
We obtain the expression
\begin{equation}\label{geomexpr}
S_{\alpha,z_{0}}(s)=\beta^{-1}+|\scrI|\psi(s)+\sum_{\gamma\in\Gamma\backslash\scrI}G_{s}(z_{0},\gamma z_{0})
\end{equation}
for $\Re s>1$.
\end{proof}

We have a uniform bound on the geometrical terms in the function $S_{\alpha,z_{0}}(s)$ for $\Re s>1$.
\begin{lem}\label{Greenbound}
For all $\rho\in\CC$ with $\Im\rho=-\sigma<-\tfrac{1}{2}$ we have the uniform bound
\begin{equation}\label{sigmabound}
\sum_{\gamma\in\Gamma\backslash\scrI}\left|G_{\tfrac{1}{2}+\i\rho}\left(z_{0},\gamma z_{0}\right)\right|\ll_{\Gamma,z_{0}}
\footnote{
The notation $f\ll_\bullet g$ means that there is a constant depending on $\bullet$ s.t $f\leq C(\bullet)g$.
}
\sigma^{-1/2}
\end{equation}
\end{lem}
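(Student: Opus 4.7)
The plan is to combine the classical integral representation of the hyperbolic Green function with the exponential lattice-point count for a cocompact $\Gamma$. Writing $s=\tfrac{1}{2}+\i\rho$ so that $\Re s=\tfrac{1}{2}+\sigma>1$, I would start from the representation
\begin{equation}
G_{s}(z,w)=\frac{1}{2\sqrt{2}\,\pi}\int_{d(z,w)}^{\infty}\frac{e^{-st}\,dt}{\sqrt{\cosh t-\cosh d(z,w)}},
\end{equation}
valid in this range. Taking absolute values inside (so $|e^{-st}|=e^{-(1/2+\sigma)t}$) and setting $l:=l_{\gamma,z_{0}}$, the problem reduces to estimating
\begin{equation}
\sum_{\gamma\in\Gamma\setminus\scrI}\int_{l_{\gamma,z_{0}}}^{\infty}\frac{e^{-(1/2+\sigma)t}\,dt}{\sqrt{\cosh t-\cosh l_{\gamma,z_{0}}}}.
\end{equation}

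The $\sigma^{-1/2}$ prefactor comes from the inner integral. Substituting $t=l+u$ and applying the elementary convexity inequality $\cosh(l+u)-\cosh l\geq u\sinh l$, valid for $u\geq0$, the inner integral is dominated by a Gamma integral:
\begin{equation}
\int_{l}^{\infty}\frac{e^{-(1/2+\sigma)t}\,dt}{\sqrt{\cosh t-\cosh l}} \;\leq\; \frac{e^{-(1/2+\sigma)l}}{\sqrt{\sinh l}}\int_{0}^{\infty}u^{-1/2}e^{-(1/2+\sigma)u}\,du \;=\; \frac{\sqrt{\pi}}{(1/2+\sigma)^{1/2}}\cdot\frac{e^{-(1/2+\sigma)l}}{\sqrt{\sinh l}}.
\end{equation}

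It then remains to control the geometric sum $\sum_{\gamma}e^{-(1/2+\sigma)l_{\gamma,z_{0}}}/\sqrt{\sinh l_{\gamma,z_{0}}}$ uniformly in $\sigma$. Discreteness of the orbit $\Gamma z_{0}$ means only finitely many terms have $l_{\gamma,z_{0}}\leq 1$ (and they contribute an acceptable constant depending on $\Gamma,z_{0}$), while for $l_{\gamma,z_{0}}\geq 1$ one has $\sinh l_{\gamma,z_{0}}\gg e^{l_{\gamma,z_{0}}}$, so the tail is majorized by $\sum_{\gamma}e^{-(1+\sigma)l_{\gamma,z_{0}}}$. The standard lattice-point bound $\#\{\gamma\in\Gamma:l_{\gamma,z_{0}}\leq L\}\ll_{\Gamma,z_{0}}e^{L}$, which follows from the exponential growth of hyperbolic ball volumes together with cocompactness of $\Gamma$, combined with Abel summation, shows the series is uniformly bounded for all $\sigma>0$. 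Assembling the two factors yields the claimed bound.

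The main subtlety is not in any single step but in maintaining uniformity in $\sigma$ throughout. A cruder pointwise asymptotic of the form $G_{s}(z,w)\sim c\,e^{-sd(z,w)}/\sqrt{\sinh d(z,w)}$ would simply lose the $(1/2+\sigma)^{-1/2}$ prefactor; it is precisely the exact integral representation, paired with the $\cosh$ convexity inequality that turns the inner integral into an elementary Gamma integral, that produces the correct $\sigma$-dependence.
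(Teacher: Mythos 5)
Your argument follows essentially the same route as the paper's proof: the Fay--Hejhal integral representation of the free Green function, the shift $t=l+u$, the convexity bound $\cosh(l+u)-\cosh l\geq u\sinh l$ producing a $\Gamma$-integral and hence the $\sigma^{-1/2}$ factor, and an exponential lattice-point count to control the remaining geometric series --- the paper simply pulls out the uniform lower bound $\sinh l_{\gamma,z_0}\geq\sinh\tau_0$ in front and then bounds $\sum_\gamma e^{-\sigma\tau_\gamma}$ directly by comparison, whereas you keep $\sqrt{\sinh l_{\gamma,z_0}}$ in the summand and absorb it into the exponent, which is a cosmetic variation. One small correction: the integral representation carries $e^{-(s-1/2)t}=e^{-\i\rho t}$, not $e^{-st}$, so the modulus of the exponential is $e^{-\sigma t}$ rather than $e^{-(1/2+\sigma)t}$; this does not change the structure or the conclusion of your argument.
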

\begin{proof}
Let $\tau_{0}=\inf\lbrace d(\gamma z_{0},z_{0})\mid\gamma\in\Gamma\backslash\scrI\rbrace$. Discreteness of $\Gamma$ implies $\tau_{0}>0$. We will make use of the integral representation of the free Green function
\begin{equation}\label{intrep}
G_{\tfrac{1}{2}+\i\rho}(z,w)=-\frac{1}{2\pi\sqrt{2}}\int_{d(z,w)}^{\infty}\frac{e^{-\i\rho t}dt}{\sqrt{\cosh t-\cosh d(z,w)}}
\end{equation}
which is valid for $\Im\rho<-\tfrac{1}{2}$. Let $\tau_{\gamma}=d(z_{0},\gamma z_{0})$. We have
\begin{equation}\label{exactbound}
\begin{split}
&\sum_{\gamma\in\Gamma\backslash\scrI}\left|G_{\tfrac{1}{2}+\i\rho}(z_{0},\gamma z_{0})\right|\\
&\leq\frac{1}{2\pi\sqrt{2}}\sum_{\gamma\in\Gamma\backslash\scrI}\int_{\tau_{\gamma}}^{\infty}\frac{e^{-\sigma t}dt}{\sqrt{\cosh t-\cosh\tau_{\gamma}}}\\
&=\frac{1}{2\pi\sqrt{2}}\sum_{\gamma\in\Gamma\backslash\scrI}e^{-\sigma\tau_{\gamma}}\int_{0}^{\infty}\frac{e^{-\sigma t}dt}{\sqrt{\cosh (t+\tau_{\gamma})-\cosh\tau_{\gamma}}}\\
&\leq\,\frac{C_{\epsilon}}{2\pi\sqrt{2\sinh\tau_{0}}}\int_{0}^{\infty}\frac{e^{-\sigma t}dt}{\sqrt{t}}\\
&=\frac{C_{\epsilon}}{\sqrt{4\pi\sigma\sinh\tau_{0}}}=C(\Gamma,z_{0})\,\sigma^{-1/2}
\end{split}
\end{equation}
since for $t>0$ and any $\gamma\in\Gamma\backslash\scrI$
\begin{equation}
\sinh\tau_{0}\leq\sinh\tau_{\gamma}\leq\frac{\cosh(\tau_{\gamma}+t)-\cosh\tau_{\gamma}}{t}
\end{equation}
and where $-\sigma<\epsilon<-1$ such that
\begin{equation}
\sum_{\gamma\in\Gamma\backslash\scrI}e^{-\sigma\tau_{\gamma}}\leq\sum_{\gamma\in\Gamma\backslash\scrI}e^{\epsilon\tau_{\gamma}}
\leq C_{\epsilon}
\end{equation}
where for $r<\frac{1}{2}\tau_{0}$ (cf. Lemma 5 in \cite{Mf}, p. 19)
\begin{equation}
C_{\epsilon}=\frac{2\pi e^{-2\pi\epsilon r}}{\Area(r)}\int_{0}^{\infty}e^{\epsilon\tau}\sinh\tau d\tau.
\end{equation}
\end{proof}

The function $S_{\alpha,z_{0}}(s)$ contains all information about the unperturbed and perturbed discrete spectrum, as well as unperturbed and perturbed resonances in form of its poles and zeros. The following Theorem locates those and gives their spectral interpretation.
\begin{thm}\label{prop}
$S_{\alpha,z_{0}}(s)$ has the following zeros and poles.\\
\begin{itemize}
\item[(i)] There are simple poles at $\tfrac{1}{2}+\i r_{j}$ and $\tfrac{1}{2}-\i r_{j}$ corresponding to eigenvalues $\mu_j=\tfrac{1}{4}+r_{j}^{2}$, $r_{j}\in\RR\cup\i\RR$.
\item[(ii)] There are simple zeros at $\tfrac{1}{2}+\i r^{\alpha}_{j}$ and $\tfrac{1}{2}-\i r^{\alpha}_{j}$, located in between the poles above, corresponding to new eigenvalues $\mu_j^\alpha=\tfrac{1}{4}+(r^{\alpha}_{j})^{2}$, $r^{\alpha}_{j}\in\RR\cup\i\RR$.\\
\end{itemize}
\end{thm}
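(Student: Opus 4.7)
My plan is to prove both parts of the theorem via a single unified representation of $S_{\alpha,z_0}(s)$ that makes its pole/zero structure transparent, namely a spectral expansion obtained from \eqref{greenspec}.

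First I would establish the meromorphic representation. Formally substituting \eqref{greenspec} into the definition of $S_{\alpha,z_0}(s)$, the logarithmic singularities of $G^\Gamma_s$, $G^\Gamma_t$, $G^\Gamma_{\bar t}$ cancel in the regularized combination $G^\Gamma_s-\tfrac12 G^\Gamma_t-\tfrac12 G^\Gamma_{\bar t}$, so letting $z\to z_0$ term by term yields
\begin{equation*}
S_{\alpha,z_{0}}(s)=\alpha^{-1}+\sum_{j=0}^\infty|\varphi_j(z_0)|^2\left(\frac{1}{\lambda_j-s(1-s)}-\frac{\lambda_j}{\lambda_j^2+1}\right).
\end{equation*}
I would justify the exchange of limit and summation by showing that the summand decays like $|\varphi_j(z_0)|^2/\lambda_j^2$, and that the local Weyl law $\sum_{\lambda_j\leq T}|\varphi_j(z_0)|^2 \ll T$ together with summation by parts gives absolute convergence uniformly on compact subsets of $\CC$ avoiding the set $\{s:s(1-s)=\lambda_j\}$. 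This agrees with Proposition~\ref{expression} on the half-plane $\Re s>1$, and furnishes the meromorphic continuation to all of $\CC$.

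For part (i), the representation immediately shows that the only poles of $S_{\alpha,z_0}$ occur where $s(1-s)=\lambda_j$, i.e. at $s=\tfrac12\pm i r_j$, and each is simple with residue $\pm(2ir_j)^{-1}\sum_{\varphi\in E_{\lambda_j}}|\varphi(z_0)|^2$. Grouping the eigenvalues into the enumeration $\{\mu_j\}$ ignoring multiplicities, the residue vanishes exactly in case (a1) (all eigenfunctions vanish at $z_0$), while in cases (a2) and (b) the residue is nonzero, giving a genuine simple pole at $\tfrac12\pm ir_j$.

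For part (ii), the correspondence between new eigenvalues and zeros of $S_{\alpha,z_0}$ is Proposition~\ref{eigencond}, so it only remains to show simplicity and interlacing. Restricting to the line $s=\tfrac12+i\rho$, the functional equation $G^\Gamma_{1-s}=G^\Gamma_s$ on the compact surface together with $\overline{G^\Gamma_s}=G^\Gamma_{\bar s}$ shows $S_{\alpha,z_0}(\tfrac12+i\rho)\in\RR$ for $\rho\in\RR$. In the variable $\tau=\rho^2$ the expansion reads
\begin{equation*}
S_{\alpha,z_0}(\tfrac12+i\rho)=\text{const}+\sum_j\frac{|\varphi_j(z_0)|^2}{r_j^2-\tau},
\end{equation*}
whose derivative $\sum_j|\varphi_j(z_0)|^2(r_j^2-\tau)^{-2}$ is strictly positive on each interval $(r_j^2,r_{j+1}^2)$ between consecutive active eigenvalues. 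Since the pole at the left endpoint forces $S\to-\infty$ and the pole at the right endpoint forces $S\to+\infty$, the intermediate value theorem together with strict monotonicity yields exactly one zero per interval, and the non-vanishing derivative at that zero shows it is simple. By symmetry $\rho\leftrightarrow-\rho$, the same zeros occur at $\tfrac12-ir_j^\alpha$. The main technical obstacle is really the first step: verifying that the regularized spectral expansion converges absolutely so that the term-wise limit $z\to z_0$ is legitimate; the remaining geometric and analytic conclusions then follow cleanly from the explicit formula.
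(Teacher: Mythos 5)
Your proposal follows essentially the same route as the paper: both start from the spectral expansion \eqref{greenspec}, derive the meromorphic representation $S_{\alpha,z_0}(\tfrac12+\i\rho)=\alpha^{-1}+\sum_j|\varphi_j(z_0)|^2\bigl(\tfrac{1}{\lambda_j-s(1-s)}-\tfrac{\lambda_j}{\lambda_j^2+1}\bigr)$, verify absolute convergence using Weyl's law (the paper via the sup-norm bound $|\varphi_j(z_0)|\ll\lambda_j^{1/4}$, you via the local Weyl law, which are interchangeable here), and then read off poles and interlacing zeros from the strict monotonicity of the restriction to the spectral lines. Your extra remarks on the residues and on the degenerate case (a1) are a mild refinement of the paper's more terse derivative computation, but the underlying argument is identical.
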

\begin{proof}
We have from \eqref{greenspec}, for $t(1-t)=\tfrac{1}{4}+\xi^{2}$ and $\mu_j=\tfrac{1}{4}+r_j^2$ with multiplicity $m_j$,
\begin{equation}
\begin{split}
S_{\alpha,z_{0}}(\tfrac{1}{2}+\i\rho)=&\alpha^{-1}+\sum_{j=0}^{\infty}|\varphi_{j}(z_{0})|^{2}\left\{\frac{1}{\rho_{j}^{2}-\rho^{2}}-\Re\left\{\frac{1}{\rho_{j}^{2}-\xi^{2}}\right\}\right\}\\
=&\alpha^{-1}+\sum_{j=0}^{\infty}m_j|\varphi_{j}(z_{0})|^{2}\left\{\frac{1}{r_{j}^{2}-\rho^{2}}-\Re\left\{\frac{1}{r_{j}^{2}-\xi^{2}}\right\}\right\}
\end{split}
\end{equation}
To see that the sum on the right is absolutely convergent, let $\lambda_j=\tfrac{1}{4}+\rho_j^2$, $\lambda=\tfrac{1}{4}+\rho^2$ and rewrite 
\begin{equation}
\begin{split}
&\sum_{j=0}^{\infty}|\varphi_{j}(z_{0})|^{2}\left\{\frac{1}{\lambda_j-\lambda}-\Re\left\{\frac{1}{\lambda_j-\i}\right\}\right\}\\
=&\sum_{j=0}^{\infty}|\varphi_{j}(z_{0})|^{2}\left\{\frac{1}{\lambda_j-\lambda}-\frac{\lambda_j}{\lambda_j^2+1}\right\}\\
=&\sum_{j=0}^{\infty}|\varphi_{j}(z_{0})|^{2}\frac{1+\lambda_j\lambda}{(\lambda_j-\lambda)(\lambda_j^2+1)}
\end{split}
\end{equation}
and absolute convergence follows from Weyl's law and the standard bound $|\varphi_j(z_0)|\ll\lambda_j^{1/4}$.

For $\rho\in\RR\cup\i\RR$, and depending on whether $\rho=v$ or $\rho=\i v$, for $v\in\RR$,
\begin{equation}
\frac{d}{dv}S_{\alpha,z_{0}}(\tfrac{1}{2}+\i\rho(v))=\pm v\sum_{j=0}^{\infty}\frac{|\varphi_{j}(z_{0})|^{2}}{(\rho_{j}^{2}\pm v^{2})^{2}}
\end{equation}
which shows that the zeros of $S_{\alpha,z_{0}}(s)$ lie in between the poles on the critical line and the real line.
\end{proof}

\section{The trace formula}

In this section we will give the proof of Theorem \ref{thm1}. We first prove a truncated trace formula. Recall
\begin{equation}
S_{\alpha,z_{0}}(s)=\alpha^{-1}+\lim_{z\to z_{0}}\lbrace G_{s}^{\Gamma}(z,z_{0})-\Re G_{t}^{\Gamma}(z,z_{0})\rbrace.
\end{equation}

\begin{prop}
Let $h\in H_{\sigma,\delta}$ and $T>0$. Define 
\begin{equation}
B(T)=\lbrace\rho\in\CC\mid|\Im\rho|<\sigma,\;|\Re\rho|<T\rbrace.
\end{equation} 
Then 
\begin{equation}\label{trunc}
\begin{split}
\sum_{\rho_{j}^{\alpha}\in B(T)}h(\rho^{\alpha}_{j})-\sum_{\rho_{j}\in B(T)}h(\rho_{j})=&\frac{1}{\pi\i}\int_{-\i\sigma-T}^{-\i\sigma+T}h(\rho)\frac{S'_{\alpha,z_{0}}}{S_{\alpha,z_{0}}}(\tfrac{1}{2}+\i\rho)d\rho\\
&+\frac{1}{\pi\i}\int_{-\i\sigma+T}^{\i\sigma+T}h(\rho)\frac{S'_{\alpha,z_{0}}}{S_{\alpha,z_{0}}}(\tfrac{1}{2}+\i\rho)d\rho.
\end{split}
\end{equation}
\end{prop}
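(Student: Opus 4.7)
The plan is to apply the argument principle to the meromorphic function $F(\rho) := S_{\alpha,z_{0}}(\tfrac{1}{2}+\i\rho)$ on the rectangular contour $\partial B(T)$, oriented counterclockwise. By Theorem \ref{prop} the function $F$ has simple zeros at $\rho = \pm r_{j}^{\alpha}$ and simple poles at $\rho = \pm r_{j}$ and nowhere else in a strip $|\Im\rho|<\sigma$ with $\sigma>\tfrac12$. Hence the weighted logarithmic derivative $h(\rho)\,F'(\rho)/F(\rho)$ is designed exactly to recover the spectral difference asked for on the left-hand side.

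First I would choose $T>0$ generically so that none of the points $\pm r_{j}^{\alpha}, \pm r_{j}$ lies on $\partial B(T)$; this is possible because the zeros and poles of $F$ form a discrete set. Noting that $F'(\rho) = \i\, S'_{\alpha,z_{0}}(\tfrac{1}{2}+\i\rho)$, the residue theorem applied to the integrand $h(\rho)\,F'(\rho)/F(\rho)$ produces exactly an integral of $h(\rho)\,S'_{\alpha,z_{0}}/S_{\alpha,z_{0}}(\tfrac{1}{2}+\i\rho)$ times a constant, with the residue sum given by $\sum_{F(\rho_0)=0}h(\rho_{0})-\sum_{\rho_0\text{ pole of }F}h(\rho_{0})$ ranging over zeros and poles inside $B(T)$.

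Second I would collapse the four-sided contour to the two sides appearing in the formula by exploiting the functional equation $S_{\alpha,z_{0}}(s)=S_{\alpha,z_{0}}(1-s)$, itself inherited from the relation $G_{s}^{\Gamma}=G_{1-s}^{\Gamma}$ valid on a compact surface (cf.\ the line below \eqref{greenspec}). This makes $F$ even in $\rho$, so $F'/F$ is odd in $\rho$, and hence $h(\rho)\,F'(\rho)/F(\rho)$ is odd in $\rho$ because $h$ is even. The substitution $\rho\mapsto-\rho$ then identifies the top-edge integral with the bottom-edge integral (the orientation reversal and the sign change from oddness combining to a net $+1$), and similarly identifies the left-edge and right-edge integrals. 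Consequently $\oint_{\partial B(T)} = 2\bigl[\int_{-\i\sigma-T}^{-\i\sigma+T} + \int_{-\i\sigma+T}^{\i\sigma+T}\bigr]$, which is precisely the combination of line integrals appearing on the right-hand side.

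Finally I would match the residue sum with the spectral difference. Since $h$ is even, each pair of zeros $\pm r_{j}^{\alpha}$ of $F$ inside $B(T)$ contributes $2h(r_{j}^{\alpha})$ to the residue sum, and likewise each pair of poles $\pm r_j$ contributes $-2h(r_j)$. Invoking the multiplicity bookkeeping of Section 2 (cases (a1), (a2), (b) together with \eqref{specid}) converts this distinct-value count at the $r_{j}^{\alpha}, r_{j}$ to the full sums $\sum_{\rho_{j}^{\alpha}\in B(T)}h(\rho_{j}^{\alpha})-\sum_{\rho_{j}\in B(T)}h(\rho_{j})$ appearing in the statement.

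The main obstacle I expect is precisely this last reconciliation: the argument principle naturally produces a sum indexed by the zeros and poles of the meromorphic $S_{\alpha,z_{0}}$, which by Theorem \ref{prop} sit at the distinct values $\pm r_{j}^{\alpha}, \pm r_{j}$, whereas the proposition is phrased in the with-multiplicity labels $\rho_{j}^{\alpha}, \rho_{j}$. One must verify that, in each multiplicity regime (a1)/(a2)/(b), the contribution of $\lambda_{j}$ (and the associated perturbed eigenvalue) to the spectral difference on the left coincides with the contribution of the corresponding zero/pole of $S_{\alpha,z_{0}}$ on the right; this is essentially a restatement of the identity \eqref{specid} restricted to the truncated box $B(T)$. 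The residue theorem itself, the symmetry collapse, and the discreteness used to choose $T$ are otherwise routine.
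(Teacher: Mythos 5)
Your proposal is correct and takes essentially the same route as the paper: apply the argument principle to $S_{\alpha,z_{0}}(\tfrac{1}{2}+\i\rho)$ on $\partial B(T)$, use the evenness in $\rho$ (via $G^{\Gamma}_{s}=G^{\Gamma}_{1-s}$, equivalently the manifestly even spectral expansion) to collapse the four-sided contour to the bottom and right edges with a factor of $2$, and finally invoke the multiplicity identity \eqref{specid} to pass from the distinct zero/pole count at $\pm r_{j}^{\alpha},\pm r_{j}$ to the with-multiplicity sums over $\rho_{j}^{\alpha},\rho_{j}$. You make explicit some bookkeeping the paper takes for granted (the orientation-reversal cancellation and the case-by-case multiplicity reconciliation), which is sound and matches the intent of the paper's terse appeal to \eqref{specid}.
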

\begin{proof}
Let $t=\tfrac{1}{2}+\i\xi$. Since $\scrM$ is compact, we have for an orthonormal basis of eigenfunctions of the Laplacian $\lbrace\varphi_{j}\rbrace_{j=0}^{\infty}$ the spectral expansion (which we recall is absolutely convergent)
\begin{equation}
S_{\alpha,z_{0}}(\tfrac{1}{2}+\i\rho)=\alpha^{-1}+\sum_{j=0}^{\infty}|\varphi_{j}(z_{0})|^{2}\left\{\frac{1}{\rho_{j}^{2}-\rho^{2}}-\Re\left\{\frac{1}{\rho_{j}^{2}-\xi^{2}}\right\}\right\}
\end{equation}
where we may rewrite (let $m_j$ be the multiplicity of $\lambda_j$) the sum on the right as
$$\sum_{j=0}^{\infty}m_j|\varphi_{j}(z_{0})|^{2}\left\{\frac{1}{r_{j}^{2}-\rho^{2}}-\Re\left\{\frac{1}{r_{j}^{2}-\xi^{2}}\right\}\right\}.$$
We obtain after a contour integration along $\partial B(T)$
\begin{equation}
\begin{split}
\sum_{\rho_{j}^{\alpha}\in B(T)}h(r^{\alpha}_{j})-\sum_{r_{j}\in B(T)}h(r_{j})
=&\frac{1}{2\pi\i}\int_{\partial B(T)}h(\rho)\frac{S'_{\alpha,z_{0}}}{S_{\alpha,z_{0}}}(\tfrac{1}{2}+\i\rho)d\rho\\
=&\frac{1}{\pi\i}\int_{-\i\sigma-T}^{-\i\sigma+T}h(\rho)\frac{S'_{\alpha,z_{0}}}{S_{\alpha,z_{0}}}(\tfrac{1}{2}+\i\rho)d\rho\\
&+\frac{1}{\pi\i}\int_{-\i\sigma+T}^{\i\sigma+T}h(\rho)\frac{S'_{\alpha,z_{0}}}{S_{\alpha,z_{0}}}(\tfrac{1}{2}+\i\rho)d\rho
\end{split}
\end{equation}
and we recall the identity \eqref{specid} to see the result.
\end{proof}

In order to prove the full trace formula we must show
\begin{equation}\label{bdtermstozero}
\lim_{T\to\infty}\int_{-\i\sigma+T}^{\i\sigma+T}h(\rho)\frac{S'_{\alpha,z_{0}}}{S_{\alpha,z_{0}}}(\tfrac{1}{2}+\i\rho)d\rho=0
\end{equation}
for $T$ such that $\partial B(T)$ does not contain any zeros or poles of $S_{\alpha,z_{0}}$. We can conclude from equation \eqref{trunc} that the limit of the boundary terms \eqref{bdtermstozero} exists. This follows from the absolute convergence of the perturbed and unperturbed traces $$\sum_{\rho_{j}^{\alpha}\in B(T)}h(\rho^{\alpha}_{j}),\;\sum_{\rho_{j}\in B(T)}h(\rho_{j})$$ as well as the expression \eqref{expr2} in Proposition \ref{expression} and the bound \eqref{sigmabound} on $S_{\alpha,z_{0}}(\tfrac{1}{2}+\i\rho)$ in Lemma \ref{Greenbound} which holds for $\sigma>\tfrac{1}{2}$ along the line  $\Im\rho=-\sigma$, i.e in particular for the line segment $[-T-\i\sigma,T-\i\sigma]$.

To see that the integral $$\frac{1}{2\pi\i}\int_{-\i\sigma-T}^{-\i\sigma+T}h(\rho)\frac{S'_{\alpha,z_{0}}}{S_{\alpha,z_{0}}}(\tfrac{1}{2}+\i\rho)d\rho$$
converges, observe 
\begin{equation}
\begin{split}
\int_{-\i\sigma-T}^{-\i\sigma+T}h(\rho)\frac{S'_{\alpha,\,z_{0}}}{S_{\alpha,\,z_{0}}}(\tfrac{1}{2}+\i\rho)d\rho
=\,&h(-\i\sigma+T)\log S_{\alpha,z_{0}}(\tfrac{1}{2}+\sigma+\i T)\\
&-h(-\i\sigma-T)\log S_{\alpha,z_{0}}(\tfrac{1}{2}+\sigma-\i T)\\
&-\int_{-\i\sigma-T}^{-\i\sigma+T}h'(\rho)\log S_{\alpha,\,z_{0}}(\tfrac{1}{2}+\i\rho)d\rho
\end{split}
\end{equation}
and, because of the asymptotics as $T\to\infty$
\begin{equation}\label{SAsymp}
|S_{\alpha,\,z_{0}}(\tfrac{1}{2}+\sigma\pm\i T)|\sim |\psi(\tfrac{1}{2}+\sigma\pm\i T)| \sim \log T,
\end{equation}
the fact that the winding number of $S_{\alpha,z_{0}}(\tfrac{1}{2}+\sigma+\i r)$ is constant for $r\in\RR$ and the decay of $h$ we have $$\lim_{T\to\infty}h(-\i\sigma\pm T)\log S_{\alpha,z_{0}}(\tfrac{1}{2}+\sigma\pm\i T)=0.$$

A straightforward application of Cauchy's Theorem yields $|h'(\rho)|\ll (1+|\Re\rho|)^{-2-\delta}$ for $|\Im\rho|<\sigma$. Note that the contour of integration does not contain any poles or zeros of $S_{\alpha,z_0}$ and the only poles or zeros off the real line can be found on the line segment $(-\i\sigma,\i\sigma)$. So we may shift the contour of integration slightly and in view of the decay of $h'$ and the asymptotics \eqref{SAsymp} we see  that the integral $$\int_{-\i\sigma-T}^{-\i\sigma+T}h'(\rho)\log S_{\alpha,\,z_{0}}(\tfrac{1}{2}+\i\rho)d\rho$$ converges.

In order to show that the limit in \eqref{bdtermstozero} is zero we construct a subsequence of line segments along which the integral tends to zero. The vanishing of the boundary terms \eqref{bdtermstozero} is a simple application of the following Proposition (see Theorem \ref{van}). We give a bound on the integral of $|\log |S_{\alpha,z_{0}}||$ on a sequence of line segments $[t_l-\i\sigma,t_l]$.
\begin{prop}\label{compbound}
There exists an increasing sequence $\lbrace t_{l}\rbrace_{l=0}^{\infty}\subset\RR_+$, such that $t_{l}\stackrel{l\to\infty}{\longrightarrow}\infty$, and for any $\epsilon>0$
\begin{equation}\label{testbound}
\int_{t_{l}-\i\sigma}^{t_{l}}\big|\log|S_{\alpha,z_{0}}(\tfrac{1}{2}+\i\rho)|\big||d\rho|\ll_\epsilon t_{l}^{2+\epsilon}.
\end{equation}
\end{prop}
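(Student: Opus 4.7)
My plan combines a pigeonhole choice of $t_l$ (using Weyl's law and the interlacing from Theorem \ref{prop}), a polynomial upper bound on $|S_{\alpha,z_0}|$ from the spectral expansion, a single-point lower bound on $|S_{\alpha,z_0}|$ using its imaginary part, and a Hadamard-type factorization with Borel--Carath\'eodory inequality to convert these into a uniform pointwise bound $\big|\log|S_{\alpha,z_0}(\tfrac{1}{2}+\i\rho)|\big|\ll t_l^2\log t_l$ on the segment. Integrating over the length-$\sigma$ segment then gives the claimed $\ll_\epsilon t_l^{2+\epsilon}$. The principal obstacle is the single-point lower bound on $|S_{\alpha,z_0}|$, which is extracted cleanly from the imaginary-part identity because each term in the spectral sum has the same sign.

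Step 1 (Choice of $t_l$): By Weyl's law $\#\{j:r_j\le R\}\sim cR^2$ and the interlacing proved in Theorem \ref{prop}, the combined set $\{r_j\}\cup\{r_j^\alpha\}$ contains $O(n)$ elements in each interval $[n,n+1]$. Setting $\eta_l:=l^{-3}$, the set of $t\in[l,l+1]$ lying within $\eta_l$ of some such point has measure $O(l^{-2})$, so I pick $t_l\in[l,l+1]$ with $|t_l-r_j|,\,|t_l-r_j^\alpha|\ge\eta_l$ for every $j$. A further pigeonhole over a radius $R_D\in[4\sigma,5\sigma]$ yields a disk $D=D(t_l-\i\sigma/2,R_D)$ whose boundary is polynomially separated from all real-axis zeros and poles of $S_{\alpha,z_0}$.

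Step 2 (Bounds on $S_{\alpha,z_0}$ and on the factor $P$): From the absolutely convergent spectral expansion in the proof of Theorem \ref{prop}, splitting the sum by whether $\lambda_j\ll|\lambda|$, $\lambda_j\sim|\lambda|$, or $\lambda_j\gg|\lambda|$, and using $|\varphi_j(z_0)|\ll\lambda_j^{1/4}$, Weyl's law, and the separation from Step 1, I obtain $|S_{\alpha,z_0}(\tfrac{1}{2}+\i\rho)|\ll t_l^C$ uniformly on $\overline D$. At the centre $\rho_0=t_l-\i\sigma/2$ the same expansion yields
\begin{equation*}
\Im S_{\alpha,z_0}(\tfrac{1}{2}+\i\rho_0)=-t_l\sigma\sum_j\frac{m_j|\varphi_j(z_0)|^2}{|r_j^2-\rho_0^2|^2},
\end{equation*}
a sum of same-sign terms; already the $j=0$ term gives $|S_{\alpha,z_0}(\tfrac{1}{2}+\i\rho_0)|\ge|\Im S_{\alpha,z_0}|\gg t_l^{-3}$. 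Factoring
\begin{equation*}
S_{\alpha,z_0}(\tfrac{1}{2}+\i\rho)=P(\rho)R(\rho),\qquad P(\rho)=\prod_{r_j\le 3t_l}\frac{\rho^2-(r_j^\alpha)^2}{\rho^2-r_j^2},
\end{equation*}
makes $R$ holomorphic and nonvanishing on $D$. Weyl gives $O(t_l^2)$ factors, each of modulus in $[\eta_l/t_l,\,t_l/\eta_l]$ by Step 1, hence $\big|\log|P|\big|\ll t_l^2\log t_l$ throughout $D$.

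Step 3 (Propagation and integration): Combining the two bounds gives $\big|\log|R(\rho_0)|\big|\ll t_l^2\log t_l$. Borel--Carath\'eodory applied to a single-valued branch of $\log R$ on $D$ (using the upper bound $\log|R|\ll t_l^2\log t_l$ on $\partial D$ from Step 2) propagates this to $\big|\log|R|\big|\ll t_l^2\log t_l$ throughout the sub-disk of radius $2\sigma$, which contains the segment. Hence $\big|\log|S_{\alpha,z_0}|\big|\le \big|\log|R|\big|+\big|\log|P|\big|\ll t_l^2\log t_l$ uniformly on the segment, and integrating over its length $\sigma$ yields $\int_{t_l-\i\sigma}^{t_l}\big|\log|S_{\alpha,z_0}(\tfrac{1}{2}+\i\rho)|\big||d\rho|\ll\sigma t_l^2\log t_l\ll_\epsilon t_l^{2+\epsilon}$.
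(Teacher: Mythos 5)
Your proposal is correct and takes a genuinely different route from the paper. Both approaches begin the same way: Weyl's law plus pigeonholing gives an increasing sequence $t_l$ polynomially separated from the real zeros and poles of $S_{\alpha,z_0}(\tfrac12+\i\rho)$, and this yields a polynomial upper bound $|S_{\alpha,z_0}|\ll t_l^C$ on the segment (the paper's Proposition~\ref{polybound} gives $C=5$ via Lemma~\ref{spacing}). The paths then diverge sharply. The paper never proves a pointwise lower bound for $|S_{\alpha,z_0}|$; instead it uses the fact that the boundary integrals $\int_{-\i\sigma+T}^{\i\sigma+T}h'(\rho)\log|S_{\alpha,z_0}|\,d\rho$ converge as $T\to\infty$ (a consequence of the truncated trace formula and Weyl's law), and constructs in Appendix~A a special test function $h_\epsilon$ (Lemma~\ref{testf}) whose $\Re h_\epsilon'$ is of one sign and $\gg T_{N(l)}^{-2-\epsilon}$ on the segments; the fixed sign of $\Re h_\epsilon'(\rho)\log(c^{-1}T_{N(l)}^{-5}|S_{\alpha,z_0}|)$ then converts convergence of the integral into the desired $L^1$ bound. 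You instead supply the missing ingredient directly: $\Im S_{\alpha,z_0}(\tfrac12+\i\rho_0)$ at $\rho_0=t_l-\i\sigma/2$ is a sum of same-sign terms, already the $j=0$ contribution gives $|S_{\alpha,z_0}(\tfrac12+\i\rho_0)|\gg t_l^{-3}$, and a Borel--Carath\'eodory argument on $R=S_{\alpha,z_0}/P$ then promotes the one-point lower bound and the polynomial upper bound into a uniform pointwise estimate $\big|\log|S_{\alpha,z_0}|\big|\ll t_l^2\log t_l$ on the segment, which is strictly stronger than the integral bound and is obtained with one fixed sequence $\{t_l\}$ valid for all $\epsilon$. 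Your route is more elementary and self-contained: it bypasses Appendix~A entirely and does not rely on convergence of the trace formula. Two places in your write-up should be tightened: (i) the pigeonhole over the radius $R_D$ needs to be phrased so that \emph{both} real intersection points of $\partial D$ with $\RR$ are simultaneously separated from all $r_j,r_j^\alpha$ (the rest of $\partial D$ is automatically at distance $\gg1$ from the real line, so only these matter, but the simultaneous choice deserves a sentence); and (ii) the claim ``$|\log|P||\ll t_l^2\log t_l$ throughout $D$'' is literally false at the interior zeros and poles of $P$ --- what you actually use and what your separation gives is this bound on $\partial D$, at the centre $\rho_0$, and on the segment $[t_l-\i\sigma,t_l]$, which is exactly what the Borel--Carath\'eodory step and the final integration require.
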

The strategy in proving Proposition \ref{compbound} is to construct (see Lemma \ref{testf}) a suitably symmetric test function $h_\epsilon\in H_{\epsilon,\sigma}$ s.t for all $\rho\in[t_l-\i\sigma,t_l]$ and $t_l$ sufficiently large we have $|\Re h'_\epsilon(\rho)|=\Re h'_\epsilon(\rho)\gg t_l^{-2-\epsilon}$. The existence of the limit for each $\epsilon>0$ and $h_\epsilon$ as above will enable us to derive the bound \eqref{testbound}.
We also require a bound on the spectral function on the line segments $[t_l-\i\sigma,t_l]$. The reason for this is that if we know $|S_{\alpha,z_0}|\leq c t_l^A$, $A\in\NN$, on these line segments, we have the bound 
\begin{equation}\label{logbound}
\begin{split}
|\log|S_{\alpha,z_0}||
&\leq |\log (c^{-1}t_l^{-A}|S_{\alpha,z_0}|)|+|\log c|+|A||\log t_l|\\
&= -\log (c^{-1}t_l^{-A}|S_{\alpha,z_0}|)+|\log c|+|A||\log t_l|
\end{split}
\end{equation} 
where we can drop the absolute value for the first term, since $\log (c^{-1}t_l^{-A}|S_{\alpha,z_0}|)\leq0$ in view of the bound $|S_{\alpha,z_0}|\leq c t_l^A$. This will be essential in the proof of Proposition \ref{compbound}.

We first construct a sequence $\lbrace T_{N}\rbrace_{N}\subset\RR_+$ s.t $S_{\alpha,z_{0}}(\tfrac{1}{2}+\i\rho)$ admits a uniform polynomial bound on the corresponding sequence of line segments. We then select a subsequence $\lbrace T_{N(l)}\rbrace_{l}\subset\RR_+$ s.t $|\Re h'_\epsilon(\rho)|=\Re h'_\epsilon(\rho)\gg T_{N(l)}^{-2-\epsilon}$ for sufficiently large $T_{N(l)}$.
\begin{prop}\label{polybound}
It exists a sequence $\lbrace T_{N}\rbrace_{N}$ in $\RR_{+}$, $\lim_{N} T_{N}=+\infty$ s.t uniformly $\forall N,\forall w\in[-\sigma,0]$
\begin{equation}\label{polyn}
\sum_{j=0}^{\infty}|\varphi_{j}(z_{0})|^{2}\left|\frac{1}{\lambda_{j}-\mu_{N}(w)}-\frac{1}{\lambda_{j}-\i}\right|\ll_{\Gamma}T_{N}^{5}
\end{equation}
where $\mu_{N}(w)=\tfrac{1}{4}+(T_{N}+\i w)^{2}$.
\end{prop}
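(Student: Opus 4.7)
The plan is to start from the spectral representation of $S_{\alpha,z_{0}}$ established earlier, dominate each summand by a transparent quantity, split the spectrum into three pieces according to the size of $r_{j}$ relative to $T_{N}$, and produce the sequence $\{T_{N}\}$ by a measure-theoretic pigeonhole that keeps $T_{N}$ a safe distance away from every $r_{j}$ in the critical range.

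First I would use the identity
\begin{equation*}
\frac{1}{\lambda_{j}-\mu_{N}(w)}-\frac{1}{\lambda_{j}-\i}=\frac{\mu_{N}(w)-\i}{(\lambda_{j}-\mu_{N}(w))(\lambda_{j}-\i)}
\end{equation*}
together with $|\mu_{N}(w)-\i|\ll T_{N}^{2}$ and $|\lambda_{j}-\i|\gg\max(1,\lambda_{j})$. For real $r_{j}\geq 0$, factoring $\lambda_{j}-\mu_{N}(w)=(r_{j}-T_{N}-\i w)(r_{j}+T_{N}+\i w)$ and using $|r_{j}+T_{N}+\i w|\geq T_{N}$ I obtain the uniform bound
\begin{equation*}
\left|\frac{1}{\lambda_{j}-\mu_{N}(w)}-\frac{1}{\lambda_{j}-\i}\right|\ll\frac{T_{N}}{\sqrt{(r_{j}-T_{N})^{2}+w^{2}}\,(1+\lambda_{j})}.
\end{equation*}
The finitely many exceptional $r_{j}\in\i(0,\tfrac{1}{2}]$ give an $O(1)$ contribution and may be discarded, and since $\sqrt{(r_{j}-T_{N})^{2}+w^{2}}\geq|r_{j}-T_{N}|$ the worst case is $w=0$, so only that value needs to be treated.

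Next, I split the sum over $r_{j}\in\RR_{+}$ into three ranges. For $r_{j}\leq T_{N}/2$ I use the dyadic local Weyl law $\sum_{R\leq r_{j}\leq 2R}|\varphi_{j}(z_{0})|^{2}\ll R^{2}$ together with $|r_{j}-T_{N}|\geq T_{N}/2$; each dyadic block contributes $O(1)$ and the total is $O(\log T_{N})$. For $r_{j}\geq 2T_{N}$ one has $|\lambda_{j}-\mu_{N}(w)|\gg\lambda_{j}$, so the contribution is $T_{N}^{2}\sum_{r_{j}\geq 2T_{N}}|\varphi_{j}(z_{0})|^{2}\lambda_{j}^{-2}=O(1)$ by the same dyadic argument. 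The critical range $r_{j}\in[T_{N}/2,2T_{N}]$, where $\lambda_{j}\asymp T_{N}^{2}$ and $|\varphi_{j}(z_{0})|^{2}\ll r_{j}\ll T_{N}$, reduces to bounding $\sum_{r_{j}\in[T_{N}/2,2T_{N}]}|r_{j}-T_{N}|^{-1}$.

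To control this last sum I would produce $T_{N}$ by pigeonhole. Weyl's law yields $\#\{j:r_{j}\in[N/2,3N]\}\ll N^{2}$, so the set of $T\in[N,N+1]$ with $|T-r_{j}|<T^{-3}$ for some such $r_{j}$ has Lebesgue measure $O(N^{2}\cdot N^{-3})=O(1/N)$, which is less than $1$ for $N$ large. Picking $T_{N}\in[N,N+1]$ outside this exceptional set gives $|T_{N}-r_{j}|\geq T_{N}^{-3}$ for every $r_{j}\in[N/2,3N]$; splitting the critical sum according to $|r_{j}-T_{N}|\leq 1$ (at most $O(T_{N})$ terms by the Weyl remainder, each bounded by $T_{N}^{3}$) and $|r_{j}-T_{N}|>1$ ($O(T_{N}^{2})$ terms, each $\leq 1$) yields $O(T_{N}^{4})\ll T_{N}^{5}$.

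The main obstacle is exactly this pigeonhole step: one needs Weyl's law with an effective polynomial remainder to bound the local eigenvalue count on windows of length $\asymp N$, and the avoidance level $T^{-3}$ must be calibrated against that count so that the bad set has measure strictly less than $1$. Once a suitable $T_{N}$ is fixed, the remaining estimates are routine applications of the local Weyl law and the pointwise bound $|\varphi_{j}(z_{0})|\ll\lambda_{j}^{1/4}$.
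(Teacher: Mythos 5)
Your proof is correct but takes a genuinely different route from the paper's. The paper constructs $T_N$ via a spacing lemma (Lemma~\ref{spacing}): Weyl's law forces a subsequence of eigenvalue gaps $\lambda_{k+1}-\lambda_k\gg 1$, and $T_N$ is placed at a midpoint inside such a gap (also clear of the one intervening zero of $S_{\alpha,z_0}$), giving the strong separation $|\rho_j-T_N|\gg T_N^{-1}$ for \emph{every} $j$; the spectral sum is then split into a central part of $\ll T_N^{2}$ terms each $\ll T_N$ and a convergent tail, with the final factor $|\mu_N(w)-\i|\ll T_N^{2}$ producing $T_N^{5}$. You instead produce $T_N$ by a measure-theoretic pigeonhole in each interval $[N,N+1]$: the bad set where some $r_j$ is within $T^{-3}$ has measure $\ll N^{-1}<1$ by Weyl's law, so a good $T_N$ exists, and you split the critical range $r_j\in[T_N/2,2T_N]$ according to whether $|r_j-T_N|\le 1$ (at most $O(T_N)$ terms by the Weyl remainder, each $\le T_N^{3}$ by the pigeonhole spacing) or $|r_j-T_N|>1$. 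The weaker separation $T_N^{-3}$ versus the paper's $T_N^{-1}$ is compensated by the tighter near-resonant count, and you land at $O(T_N^{4})\ll T_N^{5}$. What your approach buys is that it bypasses Lemma~\ref{spacing} entirely and gives a $T_N$ in \emph{every} unit interval (a denser sequence), while the paper's approach yields a cleaner $T_N^{-1}$-separation along a sparser subsequence of large gaps. Two minor remarks: the phrase ``only $w=0$ needs to be treated'' is shorthand for using the uniform lower bound $\sqrt{(r_j-T_N)^2+w^2}\ge|r_j-T_N|$, which is fine; and you invoke the dyadic local Weyl law $\sum_{R\le r_j\le 2R}|\varphi_j(z_0)|^{2}\ll R^{2}$, which is true but more than needed --- the paper's cruder pointwise bound $|\varphi_j(z_0)|^{2}\ll\lambda_j^{1/2}$ combined with Weyl's law closes all your estimates as well, given the generous exponent budget.
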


Before we give the proof of Proposition \ref{polybound} we state a Lemma which will play a central role in the proof. 
\begin{lem}\label{spacing}
Let $\lbrace\varphi_{j}\rbrace_{j}$ be the set of eigenfunctions on $\Gamma\backslash\HH$ with $(\Delta+\lambda_{j})\varphi_{j}=0$. Then there exists a subsequence $\lbrace\lambda_{j_{k}}\rbrace_{k=0}^{\infty}\subset\lbrace\lambda_{j}\rbrace_{j=0}^{\infty}$ and a constant $c_{0}(\Gamma)>0$ s. t. $\lambda_{j_{k}+1}-\lambda_{j_{k}}\geq c_{0}(\Gamma)$.
\end{lem}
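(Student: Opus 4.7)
The plan is to deduce this purely from Weyl's law, via a simple averaging/contradiction argument — no special property of hyperbolic surfaces is needed beyond the leading-order asymptotics of the counting function.

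First I would recall Weyl's law for a compact hyperbolic surface: the counting function $N(\lambda) = \#\{j : \lambda_j \leq \lambda\}$ (with multiplicity) satisfies
\begin{equation*}
N(\lambda) = \frac{\Area(\Gamma\backslash\HH)}{4\pi}\lambda + o(\lambda),\qquad \lambda\to\infty,
\end{equation*}
which inverts to $\lambda_j = \frac{4\pi}{\Area(\Gamma\backslash\HH)}j + o(j)$. Set $c_0(\Gamma) = \frac{2\pi}{\Area(\Gamma\backslash\HH)}$, i.e.\ one-half of the Weyl mean spacing.

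Next I would telescope: for any $N$,
\begin{equation*}
\sum_{j=0}^{N-1}(\lambda_{j+1}-\lambda_j) = \lambda_N - \lambda_0 \sim \frac{4\pi}{\Area(\Gamma\backslash\HH)}\,N = 2 c_0(\Gamma)\,N.
\end{equation*}
Now suppose for contradiction that only finitely many indices $j$ satisfy $\lambda_{j+1}-\lambda_j \geq c_0(\Gamma)$; say all such indices are bounded by some $J_0$. Then for every $N > J_0$ the tail sum $\sum_{j=J_0}^{N-1}(\lambda_{j+1}-\lambda_j)$ is strictly less than $c_0(\Gamma)(N-J_0)$, so the total sum is at most $c_0(\Gamma) N + O(1)$. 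Combined with the telescoping identity this forces $2c_0(\Gamma) N + o(N) \leq c_0(\Gamma) N + O(1)$, which is absurd. Hence there must exist an infinite sequence of indices $j_0 < j_1 < j_2 < \cdots$ with $\lambda_{j_k+1}-\lambda_{j_k} \geq c_0(\Gamma)$, which is exactly the desired subsequence.

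There is essentially no obstacle here; the only point that requires minor care is that $\{\lambda_j\}$ is listed with multiplicity, but this is harmless — repeated eigenvalues simply contribute zero to the telescoping sum and therefore cannot be among the $j_k$, so the large-gap indices $j_k$ automatically occur at the top of each multiplicity block. The lower bound $c_0(\Gamma)$ could be sharpened (one can replace $\tfrac{1}{2}$ by any constant $<1$ in front of the mean spacing), but the constant $\frac{2\pi}{\Area(\Gamma\backslash\HH)}$ suffices for the subsequent use of the lemma in bounding $S_{\alpha,z_0}$ on vertical segments.
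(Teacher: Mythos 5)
Your proof is correct and takes essentially the same approach as the paper: both deduce the lemma from Weyl's law by telescoping the gaps and invoking an averaging/pigeonhole principle. The paper phrases it constructively (locating a large gap inside each of a sequence of intervals $[\lambda_m,\lambda_n]$ using only two-sided bounds $c_1 n\leq\lambda_n\leq c_2 n$), while you argue by contradiction using the sharp Weyl asymptotic, which incidentally yields the slightly better constant of half the mean spacing; but this is a cosmetic difference, not a different route.
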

\begin{proof}
Weyl's law (cf. \cite{Bd})
\begin{equation}
\#\lbrace j\mid\lambda_{j}\leq T\rbrace=\frac{\Area(\scrM)}{4\pi}T+O(T^{1/2}/\log T)
\end{equation}
implies that there exist constants $c_{2}>c_{1}>0$ such that for any integer $n$
\begin{equation}
c_{1}n\leq\lambda_{n}\leq c_{2}n.
\end{equation}
Choose any integers $m$, $n$ such that $n-m$ is even and $n/m>2c_{2}/c_{1}-1$. Then
\begin{equation}
\lambda_{n}-\lambda_{m}=(\lambda_{n}-\lambda_{n-1})+\cdots+(\lambda_{m+1}-\lambda_{m})\geq c_{1}n-c_{2}m
>\tfrac{1}{2}c_{1}(n-m).
\end{equation}
Since the number of terms above is $n-m$ and all terms are nonnegative, it follows that at least one of them is $\geq\tfrac{1}{2}c_{1}$. So in every interval $[\lambda_{m},\lambda_{n}]$ we find a pair $(\lambda_{k},\lambda_{k+1})$ with $\lambda_{k+1}-\lambda_{k}\geq\tfrac{1}{2}c_{1}$. By choosing an appropriate sequence of intervals we construct the sequence $\lbrace\lambda_{j_{k}}\rbrace_{k=0}^{\infty}$.
\end{proof}

We apply Lemma \ref{spacing} to prove the Proposition.\\

\textit{Proof of Proposition \ref{polybound}.}
By Lemma \ref{spacing} we can choose an infinite increasing subsequence of Laplacian eigenvalues $\lbrace \lambda_{k(N)}\rbrace_N$ such that $\lambda_{k(N)+1}-\lambda_{k(N)}=\rho_{k(N)+1}^2-\rho_{k(N)}^2\gg1$. Recall that between two distinct consecutive eigenvalues $\lambda_{k(N)}=\tfrac{1}{4}+\rho_{k(N)}^2$ and $\lambda_{k(N)+1}=\tfrac{1}{4}+\rho_{k(N)+1}^2$ there is exactly one new eigenvalue $\mu^\alpha_{k(N)}=\tfrac{1}{4}+\chi_{k(N)}^2$ and $\chi_{k(N)}\in(\rho_{k(N)},\rho_{k(N)+1})\subset\RR_{+}$ is a zero of the function $S_{\alpha,z_0}(\tfrac{1}{2}+\i\rho)$, whereas $\rho_{k(N)}$, $\rho_{k(N)+1}$ are singularities of the same function.

So we may choose an infinite sequence
\begin{equation}\label{infseq}
T_{N}=
\begin{cases}
\tfrac{1}{2}(\rho_{k(N)}+\chi_{k(N)}),\;\text{if}\;|\chi_{k(N)}-\rho_{k(N)}|\geq|\chi_{k(N)}-\rho_{k(N)+1}|\\
\tfrac{1}{2}(\rho_{k(N)+1}+\chi_{k(N)}),\;\text{otherwise.}
\end{cases}
\end{equation}
with $|\rho_{k(N)}-\rho_{k(N)+1}|\gg|\rho_{k(N)}+\rho_{k(N)+1}|\asymp T_{N}^{-1}$. Note in particular that for all $\rho_j\in\RR_+$, 
\begin{equation}\label{lboundTN}
|\rho_j-T_N|\geq\tfrac{1}{4}|\rho_{k(N)}-\rho_{k(N)+1}|\gg T_{N}^{-1}.
\end{equation}

Let $\mu_{N}(w)=\tfrac{1}{4}+(T_{N}+\i w)^{2}$, $w\in[-\sigma,0]$. We have
\begin{equation}
\begin{split}
&\sum_{j=0}^{\infty}|\varphi_{j}(z_{0})|^{2}\left|\frac{1}{\lambda_{j}-\mu_{N}(w)}-\frac{1}{\lambda_{j}-\i}\right|\\
&\ll |\i-\mu_{N}(w)|\sum_{j=0}^{\infty}\frac{\lambda_{j}^{1/2}}{|\lambda_{j}-\mu_{N}(w)||\lambda_{j}-\i|}
\end{split}
\end{equation}
where we have used the bound $|\varphi_{j}(z_{0})|^{2}\ll\lambda_{j}^{1/2}$ (cf. \cite{Iw}, p. 108, (8.3')). Fix $\beta\in(\tfrac{1}{2},1)$. We split the sum into a central part satisfying $\inf_{w\in[-\sigma,0]}|\lambda_{j}-\mu_{N}(w)|<\lambda_{j}^{\beta}$ and a corresponding tail. For convenience we let $I_{N}(\lambda_{j})=\inf_{w\in[-\sigma,0]}|\lambda_{j}-\mu_{N}(w)|$. The first sum is estimated by
\begin{equation}
\begin{split}
&\sum_{I_{N}(\lambda_{j})<\lambda_{j}^{\beta}}\frac{\lambda_{j}^{1/2}}{|\lambda_{j}-\mu_{N}(w)||\lambda_{j}-\i|}\\
\leq&\#\lbrace j\mid I_{N}(\lambda_{j})<\lambda_{j}^{\beta}\rbrace\, \max_{I_{N}(\lambda_{j})<\lambda_{j}^{\beta}}\,\sup_{w\in[-\sigma,0]}\left\{\frac{\lambda_{j}^{1/2}}{|\lambda_{j}-\mu_{N}(w)||\lambda_{j}-\i|}\right\}.
\end{split}
\end{equation}
Now if $\lambda_{j}>\tfrac{1}{4}+T_{N}^{2}$ then $I_{N}(\lambda_{j})=\lambda_{j}-\tfrac{1}{4}-T_{N}^{2}$. It follows
\begin{equation}
\begin{split}
&\#\lbrace j\mid I_{N}(\lambda_{j})<\lambda_{j}^{\beta}\rbrace\\
\leq&\#\lbrace j\mid\lambda_{j}\leq\tfrac{1}{4}+T_{N}^{2}\rbrace
+\#\lbrace j\mid \lambda_{j}-\lambda_{j}^{\beta}<\tfrac{1}{4}+T_{N}^{2}\rbrace.
\end{split}
\end{equation}
Let
\begin{equation}
C(\beta)=\#\lbrace j\mid\lambda_{j}\leq 2^{1/(1-\beta)}\rbrace
\end{equation}
and observe that $\lambda_{j}>2^{1/(1-\beta)}$ implies $\lambda_{j}^{\beta-1}<\tfrac{1}{2}$. So $\lambda_{j}>2^{1/(1-\beta)}$ together with $\lambda_{j}(1-\lambda_{j}^{\beta-1})<\tfrac{1}{4}+T_{N}^{2}$ implies
\begin{equation}
\lambda_{j}<2\lambda_{j}(1-\lambda_{j}^{\beta-1})<\tfrac{1}{2}+2T_{N}^{2}.
\end{equation}
Hence
\begin{equation}
\begin{split}
&\#\lbrace j\mid\lambda_{j}(1-\lambda_{j}^{\beta-1})<\tfrac{1}{4}+T_{N}^{2}\rbrace\\
\leq&\,\#\lbrace j\mid\lambda_{j}\leq2^{1/(1-\beta)},\;\lambda_{j}(1-\lambda_{j}^{\beta-1})<\tfrac{1}{4}+T_{N}^{2}\rbrace\\
&+\#\lbrace j\mid\lambda_{j}>2^{1/(1-\beta)},\;\lambda_{j}(1-\lambda_{j}^{\beta-1})<\tfrac{1}{4}+T_{N}^{2}\rbrace\\
\leq&\,C(\beta)+\#\lbrace j\mid2^{1/(1-\beta)}<\lambda_{j}<\tfrac{1}{2}+2T_{N}^{2}\rbrace\\
\ll&\; T_N^2.
\end{split}
\end{equation}
It follows that
\begin{equation}
\#\lbrace j\mid I_{N}(\lambda_{j})<\lambda_{j}^{\beta}\rbrace\ll_{\beta}T_{N}^{2}.
\end{equation}
By the same observations as above we see that $I(\lambda_{j})<\lambda_{j}^{\beta}$ implies $\lambda_{j}\leq\max\lbrace2^{1/(1-\beta)},\tfrac{1}{2}+2T_{N}^{2}\rbrace$. Also for any $j\geq0$ we have (see \eqref{lboundTN}) $$|\rho_{j}-T_{N}|\geq\tfrac{1}{4}|\rho_{k(N)}-\rho_{k(N)+1}|\gg T_{N}^{-1}$$ which implies
\begin{equation}
\begin{split}
|\lambda_{j}-\mu_{N}(w)|=|\rho_{j}^{2}-(T_{N}+\i w)^{2}|
=&\;|\rho_{j}-T_{N}-\i w||\rho_{j}+T_{N}+\i w|\\
\geq&\;|\rho_{j}-T_{N}|(\rho_{j}+T_{N})\\
\gg&\;1.
\end{split}
\end{equation}
Since $|\lambda_{j}-\i|\geq1$ we have
\begin{equation}
\max_{I_{N}(\lambda_{j})<\lambda_{j}^{\beta}}\,\sup_{w\in[-\sigma,0]}\left\{\frac{\lambda_{j}^{1/2}}{|\lambda_{j}-\mu_{N}(w)||\lambda_{j}-\i|}\right\}
\ll T_{N}.
\end{equation}

The tail can be bounded as follows
\begin{equation}
\begin{split}
\sum_{I(\lambda_{j})\geq\lambda_{j}^{\beta}}\frac{\lambda_{j}^{1/2}}{|\lambda_{j}-\mu_{N}(w)||\lambda_{j}-\i|}
\leq&\sum_{I(\lambda_{j})\geq\lambda_{j}^{\beta}}\frac{\lambda_{j}^{1/2-\beta}}{|\lambda_{j}-\i|}\\
\leq&\sum_{j=0}^{\infty}\frac{\lambda_{j}^{1/2-\beta}}{|\lambda_{j}-\i|}<+\infty.
\end{split}
\end{equation}
Finally note that $|\mu_{N}(w)-\i|\ll T_{N}^{2}$.
\begin{flushright}
$\square$
\end{flushright}

The following Lemma establishes the existence of a test function in the space $H_{\sigma,\epsilon/2}$ for any $\epsilon>0$ with certain properties which we will use in the proof of Proposition \ref{compbound}. The construction is technical and we provide it in appendix A.
\begin{lem}\label{testf}
Let $\epsilon>0$ and recall the sequence $\lbrace T_N\rbrace_N$ defined by \eqref{infseq}. There exists $h_{\epsilon}\in H_{\sigma,\epsilon/2}$ such that
\begin{equation}\label{sym2}
\overline{h_{\epsilon}(\rho)}=h_{\epsilon}(\overline{\rho})
\end{equation}
and for some subsequence $\lbrace T_{N(l)}\rbrace_{l}\subset\lbrace T_{N}\rbrace_{N}$, $\lim_{l\to\infty}T_{N(l)}=\infty$ and $\rho\in[T_{N(l)},T_{N(l)}-\i\sigma]$ and $T_{N(l)}$ sufficiently large
\begin{equation}
|\Re h'_{\epsilon}(\rho)|=\Re h'_{\epsilon}(\rho)\gg T_{N(l)}^{-2-\epsilon}
\end{equation}
uniformly in $\rho$.
\end{lem}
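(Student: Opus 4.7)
The plan is to construct $h_\epsilon$ as the product of a slowly-decaying rational factor and an oscillating cosine factor, then to invoke a measure-theoretic argument to pick the cosine frequency so that $\Re h'_\epsilon$ has the correct sign on infinitely many of the vertical segments $[T_N,T_N-\i\sigma]$.

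Concretely, fix any $a>\sigma$ and define
\[
h_\epsilon(\rho)=\frac{\cos(b\rho)}{(a^2+\rho^2)^{1+\epsilon/4}},
\]
where $b>0$ will be chosen later. This $h_\epsilon$ is manifestly even, satisfies $\overline{h_\epsilon(\rho)}=h_\epsilon(\overline{\rho})$ because all coefficients are real, and is analytic on the strip $|\Im\rho|\leq\sigma$ since $a^2+\rho^2$ has no zeros there (fix any holomorphic branch of the power on the simply connected strip). Using $|\cos(b\rho)|\leq\cosh(b\sigma)$ and $|(a^2+\rho^2)^{-1-\epsilon/4}|\ll(1+|\Re\rho|)^{-2-\epsilon/2}$ in the strip, one gets $h_\epsilon\in H_{\sigma,\epsilon/2}$.

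By the quotient rule,
\[
h'_\epsilon(\rho)=-\frac{b\sin(b\rho)}{(a^2+\rho^2)^{1+\epsilon/4}}-\frac{(2+\epsilon/2)\rho\cos(b\rho)}{(a^2+\rho^2)^{2+\epsilon/4}},
\]
and writing $\rho=T+\i y$ and expanding as $T\to\infty$ uniformly in $y\in[-\sigma,0]$ yields
\[
\Re h'_\epsilon(\rho)=-b\sin(bT)\cosh(by)\,T^{-2-\epsilon/2}+O(T^{-3-\epsilon/2}).
\]
The second term of $h'_\epsilon$ contributes only $O(T^{-3-\epsilon/2})$ in absolute value and is absorbed into the error. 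If $b$ can be chosen so that $\sin(bT_{N(l)})\leq-\tfrac{1}{2}$ along a subsequence $T_{N(l)}\to\infty$, the main term is bounded below by $(b/2)\cosh(by)T_{N(l)}^{-2-\epsilon/2}\geq(b/2)T_{N(l)}^{-2-\epsilon/2}$, uniformly in $y\in[-\sigma,0]$, which exceeds $T_{N(l)}^{-2-\epsilon}$ by a factor of $T_{N(l)}^{\epsilon/2}\to\infty$.

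To produce such a $b$, set $I_N=\{b\in(0,1):\sin(bT_N)\leq-\tfrac{1}{2}\}$. Since $b\mapsto\sin(bT_N)$ has period $2\pi/T_N$ and $\{\sin\leq-\tfrac{1}{2}\}$ has length $2\pi/3$ per period, an elementary count gives $|I_N|\to 1/3$ as $T_N\to\infty$. The sets $U_M=\bigcup_{N\geq M}I_N\subset(0,1)$ form a non-increasing sequence with $|U_M|\geq|I_M|\to 1/3$, so by downward continuity of Lebesgue measure $|\limsup_N I_N|=\lim_M|U_M|\geq 1/3>0$. Any $b$ in this $\limsup$ lies in $I_{N(l)}$ for an increasing subsequence $N(l)\to\infty$, which furnishes the required subsequence. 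The main obstacle is the narrow gap between the available decay of $h_\epsilon$ (namely $T^{-2-\epsilon/2}$) and the required lower bound on $\Re h'_\epsilon$ (namely $T^{-2-\epsilon}$): differentiating a rational function alone would cost an extra factor of $T^{-1}$ and destroy the bound. The point of the cosine factor is that its derivative is of the same size as the cosine itself and hence preserves the slow rational denominator in the leading term of $h'_\epsilon$, while the measure-theoretic choice of $b$ supplies the correct sign along a subsequence.
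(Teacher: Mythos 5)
Your proposal is correct, and it takes a genuinely different route from the paper's.

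The paper builds $h_\epsilon$ as a dyadic sum of shifted rational bump functions
\[
\sum_k 2^{-n(k)(2+\epsilon/2)}\Big\{(\rho-T_{N(k)}-T_{N(k)}^\omega\mp\i\sigma_0)^{-4}+(\rho+T_{N(k)}+T_{N(k)}^\omega\pm\i\sigma_0)^{-4}\Big\}
\]
with $\omega=\epsilon/10$, after first thinning $\{T_N\}$ to a dyadically separated subsequence $n(k+1)>n(k)+3$, $2^{n(k)-1}\le T_{N(k)}$, $T_{N(k)}+T_{N(k)}^\omega\le 2^{n(k)+2}$. The point is that on the segment $[T_{N(k)},T_{N(k)}-\i\sigma]$ the $k$-th term dominates and, after a direct computation of $\Re$ of a fifth power, gives a positive contribution $\gg T_{N(k)}^{-2-\epsilon/2-5\omega}=T_{N(k)}^{-2-\epsilon}$, while the other terms are $O(T_{N(k)}^{-5})$ because of the dyadic separation. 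You instead take a single closed-form function $\cos(b\rho)/(a^2+\rho^2)^{1+\epsilon/4}$ whose derivative's leading term, $-b\sin(b\rho)(a^2+\rho^2)^{-1-\epsilon/4}$, keeps the same rational denominator (so differentiation costs no power of $T$), and then select $b$ by a $\limsup$ measure argument so that $\sin(bT_{N(l)})\le-\tfrac12$ along a subsequence. Both approaches are sound. Yours is shorter and avoids the $\omega$--bookkeeping and the dyadic separation; it even yields the stronger lower bound $\gg T^{-2-\epsilon/2}$, comfortably exceeding the required $T^{-2-\epsilon}$. The paper's construction is more explicit in that the working subsequence is produced by a concrete dyadic selection rather than by a measure-positivity argument, and the function is manifestly defined on all of $\CC$ without having to single out a branch of a fractional power; but for the use made of the lemma (Proposition~\ref{compbound}) this makes no difference. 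One small point worth writing out if you adopt your construction: the branch of $(a^2+\rho^2)^{1+\epsilon/4}$ should be taken real and positive on $\RR$ (possible since the closed strip $|\Im\rho|\le\sigma<a$ is simply connected and $a^2+\rho^2$ has no zeros there), which is exactly what makes $\overline{h_\epsilon(\rho)}=h_\epsilon(\bar\rho)$ hold; and the lower bound $|a^2+\rho^2|\ge a^2-\sigma^2+(\Re\rho)^2$ on the strip is what gives the required decay $(1+|\Re\rho|)^{-2-\epsilon/2}$ and the uniformity of the error term $O(T^{-3-\epsilon/2})$ in $y\in[-\sigma,0]$.
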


We can now use Lemma \ref{testf} to prove the Proposition.\\

\textit{Proof of Proposition \ref{compbound}.} 
There exists (cf. Proposition \ref{polybound}) an increasing sequence $\lbrace T_{N}\rbrace_{N=0}^{\infty}$, $\lim_N T_N=\infty$, such that $|S_{\alpha,z_{0}}(\tfrac{1}{2}+\i\rho)|< c(\Gamma)T_{N}^{5}$ for $\rho\in[T_{N},T_{N}-\i\sigma]$. We recall (see the argument on pp. 14-15 ) that
\begin{equation}\label{bdterms}
\lim_{N\to\infty}\int_{-\i\sigma+T_{N}}^{\i\sigma+T_{N}}h(\rho)\frac{S'_{\alpha,z_{0}}}{S_{\alpha,z_{0}}}(\tfrac{1}{2}+\i\rho)d\rho
\end{equation}
exists. We have $\arg(S_{\alpha,z_{0}}(\tfrac{1}{2}+\i\rho))\ll1$ along the edges of $B(T)$ since the winding number about the poles is less than $1$. Thus, for $N\to\infty$,
\begin{equation}
|\log S_{\alpha,z_{0}}(\tfrac{1}{2}\pm\sigma+\i T_{N})|
\ll\log|S_{\alpha,z_{0}}(\tfrac{1}{2}\pm\sigma+\i T_{N})|\sim\log\log T_{N}
\end{equation}
which follows (using evenness in $\rho$) from
\begin{equation}
\begin{split}
|S_{\alpha,z_{0}}(\tfrac{1}{2}-\sigma+\i T_{N})|
=&|S_{\alpha,z_{0}}(\tfrac{1}{2}+\sigma-\i T_{N})|\\
=&\left|\overline{S_{\alpha,z_{0}}(\tfrac{1}{2}+\sigma-\i T_{N})}\right|\\
=&|S_{\alpha,z_{0}}(\tfrac{1}{2}+\sigma+\i T_{N})|\asymp \log T_{N}
\end{split}
\end{equation} 
and the last line follows in view of equation \eqref{geomexpr}, Lemma \ref{Greenbound} as well as the approximation (cf. \cite{Iw}, p. 199, eq. (B11)) $$\psi(s)=\frac{1}{2\pi}\log s-\frac{1}{4\pi s}+O(|s|^{-1})$$ which holds uniformly for $|\arg s|<\pi-\epsilon'$ for some small $\epsilon'>0$.

Since $h(\pm\i\sigma+T_{N})\ll T_{N}^{-2-\delta}$, we conclude by integration by parts that
\begin{equation}\label{limitbdterms}
\lim_{N\to\infty}\int_{-\i\sigma+T_{N}}^{\i\sigma+T_{N}}h'(\rho)\log|S_{\alpha,z_{0}}(\tfrac{1}{2}+\i\rho)|d\rho
\end{equation}
exists.

Now we know (cf. Lemma \ref{testf}) that there is a subsequence $\lbrace T_{N(l)}\rbrace_{l}\subset\lbrace T_{N}\rbrace_{N}$ and $h_{\epsilon}\in H_{\sigma,\epsilon/2}$ (for any $\epsilon>0$) such that
\begin{equation}\label{sym}
h'_{\epsilon}(\bar{\rho})=-h'_{\epsilon}(\rho)
\end{equation}
and for $\rho\in[T_{N(l)},T_{N(l)}-\i\sigma]$ and $T_{N(l)}$ sufficiently large we have
\begin{equation}
|\Re h'_{\epsilon}(\rho)|=\Re h'_{\epsilon}(\rho)\gg T_{N(l)}^{-2-\epsilon}.
\end{equation}
We have, since the integrand is odd and because of \eqref{sym},
\begin{equation}\label{id}
\begin{split}
&\int_{-\i\sigma+T_{N(l)}}^{\i\sigma+T_{N(l)}}h'_{\epsilon}(\rho)\log|S_{\alpha,z_{0}}(\tfrac{1}{2}+\i\rho)|d\rho\\
=&\left\{\int_{-T_{N(l)}}^{-\i\sigma-T_{N(l)}}+\int_{-\i\sigma+T_{N(l)}}^{T_{N(l)}}\right\}h'_{\epsilon}(\rho)\log|S_{\alpha,z_{0}}(\tfrac{1}{2}+\i\rho)|d\rho\\
=&-2\i\int_{-\i\sigma+T_{N(l)}}^{T_{N(l)}}\Re h'_{\epsilon}(\rho)\log|S_{\alpha,z_{0}}(\tfrac{1}{2}+\i\rho)|d\rho.
\end{split}
\end{equation}
Since $|\Re h'_{\epsilon}(\rho)|\ll(1+|\Re\rho|)^{-2-\epsilon/2}$, we conclude in view of \eqref{id} and \eqref{limitbdterms} that
\begin{equation}\label{limexistsalso}
\lim_{l\to\infty}\int_{-\i\sigma+T_{N(l)}}^{T_{N(l)}}\Re h'_{\epsilon}(\rho)\log\left\{ c(\Gamma)^{-1}T_{N(l)}^{-5}|S_{\alpha,z_{0}}(\tfrac{1}{2}+\i\rho)|\right\}d\rho
\end{equation}
exists. And since for all $\rho\in[T_{N(l)},T_{N(l)}-\i\sigma]$ and $T_{N(l)}$ sufficiently large
\begin{equation}
\Re h'_{\epsilon}(\rho)\log\left\{c(\Gamma)^{-1}T_{N(l)}^{-5}|S_{\alpha,z_{0}}(\tfrac{1}{2}+\i\rho)|\right\}<0
\end{equation}
we have
\begin{equation}\label{last}
\begin{split}
&T_{N(l)}^{-2-\epsilon}\int_{0}^{\sigma}\left|\log\left\{c(\Gamma)^{-1}T_{N(l)}^{-5}|S_{\alpha,z_{0}}(\tfrac{1}{2}+\sigma-w+\i T_{N(l)})|\right\}\right|dw\\
\ll&\int_{0}^{\sigma}\Re h'_{\epsilon}(\i(w-\sigma)+T_{N(l)})\\
&\times\left|\log \left\{c(\Gamma)^{-1}T_{N(l)}^{-5}|S_{\alpha,z_{0}}(\tfrac{1}{2}+\sigma-w+\i T_{N(l)})|\right\}\right|dw\\
\ll&\left|\int_{0}^{\sigma}\Re h'_{\epsilon}(\i(w-\sigma)+T_{N(l)})\log|S_{\alpha,z_{0}}(\tfrac{1}{2}+\sigma-w+\i T_{N(l)})|dw\right|\\
&\;+O(T_{N(l)}^{-2-\epsilon}\log T_{N(l)}).
\end{split}
\end{equation}
The first term on the RHS of \eqref{last} converges as $l\to\infty$ (cf. \eqref{limexistsalso}), which implies 
\begin{equation}
\int_{0}^{\sigma}|\log\lbrace c(\Gamma)^{-1}T_{N(l)}^{-5}|S_{\alpha,z_{0}}(\tfrac{1}{2}+\sigma-w+\i T_{N(l)})|\rbrace|dw\ll_\epsilon T_{N(l)}^{2+\epsilon}.\\
\end{equation}
which in turn implies (cf. \eqref{logbound})
\begin{equation}
\begin{split}
&\int_{T_{N(l)}-\i\sigma}^{T_{N(l)}}|\log|S_{\alpha,z_{0}}(\tfrac{1}{2}+\i\rho)|||d\rho|\\
= &\int_{0}^{\sigma}|\log|S_{\alpha,z_{0}}(\tfrac{1}{2}+\sigma-w+\i T_{N(l)})||dw\ll_\epsilon T_{N(l)}^{2+\epsilon}.
\end{split}
\end{equation}
\begin{flushright}
$\square$
\end{flushright}

Next we apply Proposition \ref{compbound} to derive the vanishing of the sequence of boundary terms \eqref{bdterms}.
\begin{thm}\label{van}
Let $\delta,\eta>0$ and $\lbrace T_{N(l)}\rbrace_{l}$ as above. Then for any $h\in H_{\sigma+\eta,\delta}$
\begin{equation}
\lim_{T_{N(l)}\to\infty}\int_{T_{N(l)}-\i\sigma}^{T_{N(l)}}h(\rho)\frac{S'_{\alpha,z_{0}}}{S_{\alpha,z_{0}}}{(\tfrac{1}{2}+\i\rho)}d\rho=0.
\end{equation}
\end{thm}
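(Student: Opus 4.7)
The plan is to integrate by parts so that the derivative of $S_{\alpha,z_0}$ is moved off, yielding a $\log S_{\alpha,z_0}$ to which Proposition \ref{compbound} applies. Writing $F(\rho)=\log S_{\alpha,z_0}(\tfrac{1}{2}+\i\rho)$, we have $F'(\rho)=\i\,(S'_{\alpha,z_0}/S_{\alpha,z_0})(\tfrac{1}{2}+\i\rho)$, so
\begin{equation*}
\int_{T_{N(l)}-\i\sigma}^{T_{N(l)}}h(\rho)\frac{S'_{\alpha,z_0}}{S_{\alpha,z_0}}(\tfrac{1}{2}+\i\rho)d\rho
=\bigl[-\i h(\rho)F(\rho)\bigr]_{T_{N(l)}-\i\sigma}^{T_{N(l)}}
+\i\int_{T_{N(l)}-\i\sigma}^{T_{N(l)}}h'(\rho)F(\rho)d\rho.
\end{equation*}

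Next I handle the boundary terms. At the endpoints $\rho=T_{N(l)}$ and $\rho=T_{N(l)}-\i\sigma$, the asymptotics \eqref{SAsymp} (together with the uniform bound of Lemma \ref{Greenbound} for the lower endpoint) give $|F(\rho)|\ll\log\log T_{N(l)}$, since the argument is controlled by the bounded winding number as already noted in the text. Because $h\in H_{\sigma+\eta,\delta}$, $|h(\rho)|\ll T_{N(l)}^{-2-\delta}$ at both endpoints, so the boundary terms are $O(T_{N(l)}^{-2-\delta}\log\log T_{N(l)})\to 0$.

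It remains to show the integral $\int h'(\rho)F(\rho)d\rho$ vanishes. The key point is that since $h\in H_{\sigma+\eta,\delta}$ is analytic and decays like $(1+|\Re\rho|)^{-2-\delta}$ in the wider strip $|\Im\rho|\leq\sigma+\eta$, an application of Cauchy's integral formula with a circle of radius $\eta/2$ gives the uniform bound
\begin{equation*}
|h'(\rho)|\ll_\eta (1+|\Re\rho|)^{-2-\delta},\qquad |\Im\rho|\leq\sigma.
\end{equation*}
Splitting $F(\rho)=\log|S_{\alpha,z_0}(\tfrac{1}{2}+\i\rho)|+\i\arg S_{\alpha,z_0}(\tfrac{1}{2}+\i\rho)$, the argument is bounded uniformly (again by the winding number observation), while Proposition \ref{compbound} gives $\int_{T_{N(l)}-\i\sigma}^{T_{N(l)}}|\log|S_{\alpha,z_0}(\tfrac{1}{2}+\i\rho)|||d\rho|\ll_\epsilon T_{N(l)}^{2+\epsilon}$ for any $\epsilon>0$. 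Combining these estimates,
\begin{equation*}
\left|\int_{T_{N(l)}-\i\sigma}^{T_{N(l)}}h'(\rho)F(\rho)d\rho\right|\ll_{\eta,\epsilon}T_{N(l)}^{-2-\delta}\cdot T_{N(l)}^{2+\epsilon}=T_{N(l)}^{\epsilon-\delta},
\end{equation*}
which tends to $0$ upon choosing any $\epsilon<\delta$. This yields the claim.

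The main obstacle is making the $|h'|$ estimate strong enough to beat the $T_{N(l)}^{2+\epsilon}$ growth from Proposition \ref{compbound}; this is exactly why the statement requires analyticity in the slightly wider strip $|\Im\rho|\leq\sigma+\eta$ (rather than just $\sigma$), so that Cauchy's formula transfers the pointwise decay of $h$ itself to its derivative without loss.
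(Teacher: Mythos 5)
The integration-by-parts skeleton of your argument matches the paper's, and your use of Cauchy's formula in the wider strip to get $|h'(\rho)|\ll(1+|\Re\rho|)^{-2-\delta}$, together with the splitting $F=\log|S_{\alpha,z_0}|+\i\arg S_{\alpha,z_0}$ and Proposition \ref{compbound}, is exactly right for the integral term. The gap is in the boundary term at the endpoint $\rho=T_{N(l)}$. You claim $|F(\rho)|\ll\log\log T_{N(l)}$ at both endpoints by invoking \eqref{SAsymp}, but \eqref{SAsymp} is an asymptotic for $S_{\alpha,z_0}(\tfrac{1}{2}+\sigma\pm\i T)$, i.e. at $\Re s=\tfrac{1}{2}+\sigma>1$, where the geometric series for $S_{\alpha,z_0}$ converges and Lemma \ref{Greenbound} controls the Green-function sum. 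That point corresponds to $\rho=T_{N(l)}-\i\sigma$, the \emph{lower} endpoint. At the \emph{upper} endpoint $\rho=T_{N(l)}$, the relevant argument of $S_{\alpha,z_0}$ is $s=\tfrac{1}{2}+\i T_{N(l)}$, on the critical line. There Proposition \ref{polybound} gives only the one-sided bound $|S_{\alpha,z_0}(\tfrac{1}{2}+\i T_{N(l)})|\ll T_{N(l)}^5$; nothing in the paper (nor anything you cite) provides a \emph{lower} bound on $|S_{\alpha,z_0}(\tfrac{1}{2}+\i T_{N(l)})|$, so $\log|S_{\alpha,z_0}(\tfrac{1}{2}+\i T_{N(l)})|$ could a priori tend to $-\infty$ faster than $T_{N(l)}^{2+\delta}$, and the boundary term $h(T_{N(l)})F(T_{N(l)})$ need not vanish. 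The choice of $T_N$ in \eqref{infseq} only keeps $T_N$ at distance $\gg T_N^{-1}$ from the nearest pole and zero of $S_{\alpha,z_0}(\tfrac{1}{2}+\i\cdot)$; that does not translate into a lower bound on the modulus.

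The paper sidesteps this precisely by never integrating by parts over the half-edge $[T_{N(l)}-\i\sigma,T_{N(l)}]$ alone. Since $h$ is even and $\tfrac{d}{d\rho}\log S_{\alpha,z_0}(\tfrac{1}{2}+\i\rho)$ is odd, the half-edge contribution is paired with the reflected left-edge contribution, which is the same as working with the full vertical edge $[T_{N(l)}-\i\sigma,T_{N(l)}+\i\sigma]$ of $\partial B(T_{N(l)})$. After integration by parts, the boundary terms then sit at the \emph{corners} $\rho=T_{N(l)}\pm\i\sigma$, both of which map to $\Re s=\tfrac{1}{2}\mp\sigma$ where \eqref{SAsymp} yields the $\log\log T_{N(l)}$ bound; the midpoint $\rho=T_{N(l)}$ never appears as a boundary term. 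The paper's identity \eqref{id} (adapted to general $h$) also collapses the integrand to $\{h'(\rho)-h'(-\bar\rho)\}\log|S_{\alpha,z_0}|$, so only $\log|S_{\alpha,z_0}|$ — exactly what Proposition \ref{compbound} controls — survives. To repair your argument, replace the half-edge by the full edge before integrating by parts (which is anyway what \eqref{bdtermstozero} requires), so that the problematic endpoint $\rho=T_{N(l)}$ is interior to the contour rather than a boundary point; the rest of your estimates then go through.
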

\begin{proof}
We follow the same lines as in Proposition \ref{compbound}. In exactly the same way as in the proof above we obtain the identity \eqref{id} for $h\in H_{\sigma+\eta,\delta}\subset H_{\sigma,\delta}$. So
\begin{equation}
\begin{split}
&\lim_{j\to\infty}\left\{\int_{T_{N(l)}-\i\sigma}^{T_{N(l)}}+\int_{-T_{N(l)}}^{-T_{N(l)}-\i\sigma}\right\}h(\rho)\frac{d}{d\rho}\log S_{\alpha,z_{0}}(\tfrac{1}{2}+\i\rho)d\rho\\
=&\lim_{j\to\infty}\int_{T_{N(l)}-\i\sigma}^{T_{N(l)}}\lbrace h'(\rho)-h'(-\bar{\rho})\rbrace\log|S_{\alpha,z_{0}}(\tfrac{1}{2}+\i\rho)|d\rho.
\end{split}
\end{equation}
The limit vanishes since for any small $\epsilon>0$
\begin{equation}
\begin{split}
&\int_{T_{N(l)}-\i\sigma}^{T_{N(l)}}|\lbrace h'(\rho)-h'(-\bar{\rho})\rbrace|\big|\log|S_{\alpha,z_{0}}(\tfrac{1}{2}+\i\rho)|\big||d\rho|\\
\ll\;&T_{N(l)}^{-2-\delta}\int_{T_{N(l)}-\i\sigma}^{T_{N(l)}}\big|\log|S_{\alpha,z_{0}}(\tfrac{1}{2}+\i\rho)|\big||d\rho|
\ll_\epsilon T_{N(l)}^{\epsilon-\delta}
\end{split}
\end{equation}
where we have used Propositions \ref{compbound}, and we observe that by Cauchy's theorem $h\in H_{\sigma+\eta,\delta}$ implies
\begin{equation}
|h'(\rho)|\ll(1+|\Re\rho|)^{-2-\delta}
\end{equation}
uniformly in $|\Im\rho|\leq\sigma$.
\end{proof}

As an application of Theorem \ref{van} we can now prove the trace formula. Let $h\in H_{\sigma,\delta}$ for any $\sigma>\tfrac{1}{2}$ and $\delta>0$. We will make a specific choice of $\sigma$ below. We take $T\to\infty$ in \eqref{trunc} and divide by $2$ to obtain
\begin{equation}\label{pretrace}
\begin{split}
\sum_{j=0}^{\infty}\lbrace h(\rho^{\alpha}_{j})-h(\rho_{j})\rbrace
=\;&-\frac{1}{2\pi}\int_{-\i\sigma-\infty}^{-\i\sigma+\infty}h'(\rho)\log S_{\alpha,z_{0}}(\tfrac{1}{2}+\i\rho)d\rho.\\
\end{split}
\end{equation}
where we have used Theorem \ref{van}, the existence of the limit and Weyl's law. 

In order to complete the proof of Theorem \ref{thm1} we need to expand the RHS of \eqref{pretrace} into an identity term and diffractive orbit terms. We define
\begin{equation}
G_{s}^{\Gamma\backslash\scrI}(z,w)=\sum_{\gamma\in\Gamma\backslash\scrI}G_{s}(z,\gamma w).
\end{equation}
In the first term of \eqref{pretrace} we can expand the logarithm in a power series
\begin{equation}\label{series}
\begin{split}
&\log\left[1+m\beta\psi(\tfrac{1}{2}+\i\rho)+\beta G^{\Gamma\backslash\scrI}_{\tfrac{1}{2}+\i\rho}\left(z_{0}, z_{0}\right)\right]\\
=\;&\log\left[1+m\beta\psi(\tfrac{1}{2}+\i\rho)\right]-\sum_{k=1}^{\infty}\frac{(-1)^{k}}{k}\left(\frac{\beta G^{\Gamma\backslash\scrI}_{\tfrac{1}{2}+\i\rho}(z_{0}, z_{0})}{1+m\beta\psi(\tfrac{1}{2}+\i\rho)}\right)^{k}.
\end{split}
\end{equation}
This series converges absolutely and uniformly for all $\rho\in\CC$ with $\Im\rho=-\sigma$ if $\sigma$ is sufficiently large. To see this consider the following estimate (where $B_k$ denotes the kth Bernoulli number)
\begin{equation}
\begin{split}
&\left|\beta^{-1}+m\psi(\tfrac{1}{2}+\sigma+\i t)\right|\\
\geq& -\left|\beta\right|^{-1}+m\left|\psi(\tfrac{1}{2}+\sigma+\i t)\right|\\
\geq&-\left|\beta\right|^{-1}+m\left|\log(\tfrac{1}{2}+\sigma+\i t)\right|\\
&-\frac{m}{2\left|\tfrac{1}{2}+\sigma+\i t\right|}
-m\sum_{n=1}^{\infty}\frac{B_{2n}}{n!}\left|\tfrac{1}{2}+\sigma+\i t\right|^{-2n}\\
\geq&-\left|\beta\right|^{-1}+m\log(\tfrac{1}{2}+\sigma)-\frac{m\pi}{2}\\
&-\frac{m}{1+2\sigma}-m\sum_{n=1}^{\infty}\frac{B_{2n}}{n!}(\tfrac{1}{2}+\sigma)^{-2n}.
\end{split}
\end{equation}
Since
\begin{equation}
\frac{1}{1+2\sigma}+\sum_{n=1}^{\infty}\frac{B_{2n}}{n!}\sigma^{-2n}=O(\sigma^{-1})
\end{equation}
we infer from the above for large enough $\sigma>\tfrac{1}{2}$ that
\begin{equation}
\left|\beta^{-1}+m\psi(\tfrac{1}{2}+\sigma+\i t)\right|\geq q(\beta)\log(\tfrac{1}{2}+\sigma)
\end{equation}
for some constant $0<q(\beta)<m$. Combining this with Lemma \ref{Greenbound} we obtain for $\Im\rho=-\sigma$ the estimate
\begin{equation}\label{bigbound}
\left|\frac{\beta G^{\Gamma\backslash\scrI}_{1/2+\i\rho}(z_{0},z_{0})}{1+m\beta\psi(\tfrac{1}{2}+\i\rho)}\right|
\leq\frac{q(\beta)^{-1}C(\Gamma,z_{0})}{(\tfrac{1}{2}+\sigma)^{1/2}\log(\tfrac{1}{2}+\sigma)}.
\end{equation}
We can now choose $\tilde{\sigma}(\beta)>\tfrac{1}{2}$ large enough such that
\begin{equation}\label{large1}
\frac{q(\beta)^{-1}C(\Gamma,z_{0})}{(\tfrac{1}{2}+\tilde{\sigma}(\beta))^{1/2}\log(\tfrac{1}{2}+\tilde{\sigma}(\beta))}<1
\end{equation}
and $\tilde{\sigma}(\beta)>\left|\Im\rho^{\alpha}_{0}\right|$. This choice ensures that the series \eqref{series} converges absolutely and uniformly for all $\rho\in\CC$ with $\Im\rho=-\tilde{\sigma}(\beta)$.
Let $\sigma(\beta)=\tilde{\sigma}(\beta)+\eta$ for some $\eta>0$ and $h\in H_{\sigma(\beta),\delta}$. We now have for the RHS of \eqref{pretrace}
\begin{equation}\label{tr2}
\begin{split}
&\frac{1}{2\pi\i}\int_{-\i\tilde{\sigma}-\infty}^{-\i\tilde{\sigma}+\infty}h(\rho)\frac{m\beta\psi'(\tfrac{1}{2}+\i\rho)}{1+m\beta\psi(\tfrac{1}{2}+\i\rho)}d\rho\\
+&\frac{1}{2\pi\i}\sum_{k=1}^{\infty}\frac{(-\beta)^{k}}{k}\sum_{\gamma_{1},\cdots,\gamma_{k}\in\Gamma\backslash\scrI}\int_{-\i\tilde{\sigma}-\infty}^{-\i\tilde{\sigma}+\infty}h'(\rho)\frac{\prod_{j=1}^{k}G_{1/2+\i\rho}(z_{0},\gamma_{j}z_{0})}{(1+m\beta\psi(\tfrac{1}{2}+\i\rho))^{k}}d\rho.
\end{split}
\end{equation}
Substituting the integral representation \eqref{intrep} of the free Green function and changing order of integration gives
\begin{equation}\label{tr3}
\begin{split}
&\frac{1}{2\pi\i}\int_{-\i\tilde{\sigma}-\infty}^{-\i\tilde{\sigma}+\infty}h(\rho)\frac{m\beta\psi'(\tfrac{1}{2}+\i\rho)}{1+m\beta\psi(\tfrac{1}{2}+\i\rho)}d\rho
+\frac{1}{2\pi\i}\sum_{k=1}^{\infty}\frac{(-\beta)^{k}}{k}\\ &\times\sum_{\gamma_{1},\cdots,\gamma_{k}\in\Gamma\backslash\scrI}\int_{-\i\tilde{\sigma}-\infty}^{-\i\tilde{\sigma}+\infty}\int_{\tau_{1}}^{\infty}\cdots\int_{\tau_{k}}^{\infty}\frac{h'(\rho)}{(1+m\beta\psi(\tfrac{1}{2}+\i\rho))^{k}}\\
&\times\prod_{j=1}^{k}\frac{\,e^{-\i\rho t_{j}}}{\sqrt{\cosh t_{j}-\cosh\tau_{j}}}\prod_{j=1}^{k}dt_{j}\,d\rho
\end{split}
\end{equation}
Note that we have the bound
\begin{equation}
\begin{split}
&\frac{|h'(\rho)|}{|1+m\beta\psi(\tfrac{1}{2}+\i\rho)|^{k}}\prod_{j=1}^{k}\left|\frac{\,e^{-\i\rho t_{j}}}{\sqrt{\cosh t_{j}-\cosh\tau_{j}}}\right|\\
\ll&\;e^{(\Im\rho-1/2)(t_{1}+\cdots+t_{k})}\,(1+\left|\Re\rho\right|)^{-2-\delta}(\log\left|\Re\rho\right|)^{-k}
\end{split}
\end{equation}
on the integrand. So we can exchange integration by Fubini's Theorem. Recall (cf. section 1.2.) that the only zero of $1+m\beta\psi(\tfrac{1}{2}+\i\rho)$ in the halflane $\Im\rho<0$ is given by $\rho=-\i v_{\beta}$. So we have by shifting the contour from $\Im\rho=-\tilde{\sigma}$ to $\Im\rho=-\nu$
\begin{equation}
g_{\beta,k}(t)=\frac{(-1)^{k}}{2\pi\i k}\int_{-\i\tilde{\sigma}-\infty}^{-\i\tilde{\sigma}+\infty}h'(\rho)\frac{e^{-\i\rho t}}{(1+m\beta\psi(\tfrac{1}{2}+\i\rho))^{k}}d\rho.
\end{equation}
So \eqref{tr3} equals
\begin{displaymath}
\begin{split}
&\frac{1}{2\pi\i}\int_{-\i\nu-\infty}^{-\i\nu+\infty}h(\rho)\frac{m\beta\psi'(\tfrac{1}{2}+\i\rho)}{1+m\beta\psi(\tfrac{1}{2}+\i\rho)}d\rho\\
+&\sum_{k=1}^{\infty}\beta^k \sum_{\gamma_{1},\cdots,\gamma_{k}\in\Gamma\backslash\scrI}\int_{\tau_{1}}^{\infty}\cdots\int_{\tau_{k}}^{\infty}\prod_{j=1}^{k}\frac{g_{\beta,k}(t_{1}+\cdots+t_{k})}{\sqrt{\cosh t_{j}-\cosh\tau_{j}}}\prod_{j=1}^{k}dt_j.
\end{split}
\end{displaymath}

We have the bound
\begin{equation}
|g_{\beta,k}(t)|\leq\frac{e^{-\tilde{\sigma} t}\int_{-\infty}^{\infty}|h'(\rho)|d\rho}{2\pi k\beta^{k}q(\beta)^{k}(\log(\tfrac{1}{2}+\tilde{\sigma}))^{k}}.
\end{equation}
We define the subset $\Gamma_{r,z_{0}}=\left\{\gamma\in\Gamma\backslash\scrI\mid d(z_{0},\gamma z_{0})<r\right\}$ of $\Gamma$ which is clearly finite because of discreteness of the group. For some $r>0$ we obtain the estimate
\begin{equation}
\begin{split}
&\sum_{\gamma_{1},\cdots,\gamma_{k}\in\Gamma_{r,z_{0}}}\int_{\tau_{1}}^{\infty}\cdots\int_{\tau_{k}}^{\infty}\frac{\left|g_{\beta,k}(t_{1}+\cdots+t_{k})\right|\prod_{n=1}^{k}dt_{n}}{\prod_{n=1}^{k}\sqrt{\cosh t_{n}-\cosh\tau_{n}}}\\
\ll&\;\beta^{-k}(q(\beta)\log(\tfrac{1}{2}+\tilde{\sigma}))^{-k}\left[\sum_{\gamma\in\Gamma_{r,z_{0}}}\int_{\tau_{\gamma}}^{\infty}\frac{e^{-\tilde{\sigma} t}dt}{\sqrt{\cosh t-\cosh\tau_{\gamma}}}\right]^{k}\\
\leq&\;\beta^{-k}(q(\beta)\log(\tfrac{1}{2}+\tilde{\sigma}))^{-k}\left[\sum_{\gamma\in\Gamma\backslash\scrI}\int_{\tau_{\gamma}}^{\infty}\frac{e^{-\tilde{\sigma} t}dt}{\sqrt{\cosh t-\cosh\tau_{\gamma}}}\right]^{k}\\
\leq&\;\beta^{-k}\left(\frac{q(\beta)^{-1}C(\Gamma,z_{0})}{(\tfrac{1}{2}+\tilde{\sigma})^{1/2}
\log(\tfrac{1}{2}+\tilde{\sigma})}\right)^{k}<\;\beta^{-k}
\end{split}
\end{equation}
where we follow the same lines as in the proof of Lemma \ref{Greenbound}. So independently of $r>0$ the sum over $k$ converges absolutely. Taking $r\to\infty$ we see that
$$\frac{1}{2\pi\i}\sum_{k=1}^{\infty}\frac{(-\beta)^{k}}{k}\sum_{\gamma_{1},\cdots,\gamma_{k}\in\Gamma\backslash\scrI}\int_{\tau_{1}}^{\infty}\cdots\int_{\tau_{k}}^{\infty}\frac{g_{\beta,k}(t_{1}+\cdots+t_{n})\prod_{k=1}^{n}dt_{n}}{\prod_{n=1}^{k}\sqrt{\cosh t_{n}-\cosh\tau_{n}}}$$
converges absolutely.

\section*{Acknowledgements}
I would like to thank my supervisor Jens Marklof for his guidance during the completion of this work. I am also grateful to Andreas Str\"ombergsson, Yiannis Petridis, Andy Booker and the anonymous referee for many helpful suggestions that have led to the improvement of this paper. 

\begin{appendix}

\section{Construction of the test function $h_\epsilon$}

In this section we give the proof of Lemma \ref{testf}. Let $\epsilon\in(0,1)$. We have to show that there exists $h_{\epsilon}\in H_{\sigma,\epsilon/2}$ such that
\begin{equation}\label{sym22}
\overline{h_{\epsilon}(\rho)}=h_{\epsilon}(\overline{\rho})
\end{equation}
and for some subsequence $\lbrace T_{N(j)}\rbrace_{j}\subset\lbrace T_{N}\rbrace_{N}$, $\lim_{j\to\infty}T_{N(j)}=\infty$ and $\rho\in[T_{N(j)},T_{N(j)}-\i\sigma]$ and $T_{N(j)}$ sufficiently large
\begin{equation}
|\Re h'_{\epsilon}(\rho)|=\Re h'_{\epsilon}(\rho)\gg T_{N(j)}^{-2-\epsilon}
\end{equation}
uniformly in $\rho$.

\begin{proof}
Let $\omega=\epsilon/10$. Since $\lim_{N\to\infty}T_{N}=\infty$, we can pick a subsequence $\lbrace T_{N(k)}\rbrace_{k=1}^{\infty}\subset\lbrace T_{N}\rbrace_{N=1}^{\infty}$ such that 
\begin{equation}
2^{n(k)-1}\leq T_{N(k)}, \;T_{N(k)}+T_{N(k)}^\omega\leq 2^{n(k)+2}
\end{equation} 
for some integer $n(k)\geq1$ with $n(k+1)>n(k)+3$ for all $k\geq1$. Now let $\sigma_{0}>\sigma$ and consider the test function
\begin{equation}
\begin{split}
h_{\epsilon}(\rho)=\sum_{k=1}^{\infty}2^{-n(k)(2+\epsilon/2)}\bigg\{
& \frac{1}{(\rho-T_{N(k)}-T_{N(k)}^\omega-\i\sigma_{0})^{4}}\\
&+\frac{1}{(\rho-T_{N(k)}-T_{N(k)}^\omega+\i\sigma_{0})^{4}}\\
& +\frac{1}{(\rho+T_{N(k)}+T_{N(k)}^\omega+\i\sigma_{0})^{4}}\\
&+\frac{1}{(\rho+T_{N(k)}+T_{N(k)}^\omega-\i\sigma_{0})^{4}}\bigg\}.
\end{split}
\end{equation}
By construction the property \eqref{sym22} is fulfilled. $h_{\epsilon}$ is even and analytic in the strip $\Im\rho\leq\sigma$. We will show that it also satisfies $h_{\epsilon}(\rho)\ll(1+|\Re\rho|)^{-2-\epsilon/2}$ uniformly in the strip $\Im\rho\leq\sigma$. Because of evenness we only have to prove the bound for $\Re\rho>0$. We have
\begin{equation}
\begin{split}
&|h_{\epsilon}(\rho)|\leq2\sum_{k=1}^{\infty}2^{-n(k)(2+\epsilon/2)}\\
&\left\{\frac{1}{|\rho-T_{N(k)}-T_{N(k)}^\omega+\i\sigma_{0}|^{4}}+\frac{1}{|\rho+T_{N(k)}+T_{N(k)}^\omega+\i\sigma_{0}|^{4}}\right\}.
\end{split}
\end{equation}
We will estimate the two parts separately. For the second sum we have
\begin{equation}\label{secondsum}
\begin{split}
&\sum_{k=1}^{\infty}2^{-n(k)(2+\epsilon/2)}\frac{1}{|\rho+T_{N(k)}+T_{N(k)}^\omega+\i\sigma_{0}|^{4}}\\
&\leq\sum_{k=1}^{\infty}2^{-n(k)(2+\epsilon/2)}\frac{1}{(\Re\rho+T_{N(k)}+T_{N(k)}^\omega)^{4}}\\
&\leq|\Re\rho|^{-4}\sum_{k=1}^{\infty}2^{-k(2+\epsilon/2)}.
\end{split}
\end{equation}
The first sum is more difficult to estimate. Since $h_{\epsilon}$ is analytic in the strip $\Im\rho\leq\sigma$ it suffices to prove the bound for $\Re\rho\geq1$. So there exists an integer $j\geq1$ such that $2^{j-1}\leq\Re\rho\leq2^{j}$. We obtain
\begin{equation}
\begin{split}
&\sum_{k=1}^{\infty}2^{-n(k)(2+\epsilon/2)}\frac{1}{|\rho-T_{N(k)}-T_{N(k)}^\omega+\i\sigma_{0}|^{4}}\\
\leq&\sum_{n(k)\neq j-3,\cdots,j+1}2^{-n(k)(2+\epsilon/2)}\frac{1}{|\Re\rho-T_{N(k)}-T_{N(k)}^\omega|^{4}}\\
&+\frac{1}{|\Im\rho+\sigma_{0}|^{4}}\sum_{i=1}^5 2^{-(j-4+i)(2+\epsilon/2)}.
\end{split}
\end{equation}
Now, since $\Re\rho\geq2^{j-1}$ and $|\Im\rho|\leq\sigma<\sigma_{0}$,
\begin{equation}
\frac{1}{|\Im\rho+\sigma_{0}|^{4}}\sum_{i=1}^5 2^{-(j-4+i)(2+\epsilon/2)}\ll|\Re\rho|^{-2-\epsilon/2}.
\end{equation}
Next we estimate the sum. First observe that for $n(k)\leq j-4$
\begin{equation}
|T_{N(k)}+T_{N(k)}^\omega-\Re\rho|\geq2^{j-1}-2^{n(k)+2}=2^{j-1}(1-2^{n(k)-j+3})\geq2^{j-2}
\end{equation}
and for $n(k)\geq j+2$
\begin{equation}
|T_{N(k)}+T_{N(k)}^\omega-\Re\rho|\geq2^{n(k)-1}-2^{j}=2^{j}(2^{n(k)-j-1}-1)\geq2^{j},
\end{equation}
which implies
\begin{equation}\label{estrem}
\begin{split}
&\sum_{k\neq j-3,\cdots,j+1}2^{-n(k)(2+\epsilon/2)}\frac{1}{|\Re\rho-T_{N(k)}-T_{N(k)}^\omega|^{4}}\\
&\leq 2^{8-4j}\sum_{k\neq j-3,\cdots,j+1}2^{-n(k)(2+\epsilon/2)}\\
&\leq 2^{8-4j}\sum_{k=1}^{\infty}2^{-k(2+\epsilon/2)}\\
&\leq 2^{4}|\Re\rho|^{-4}\sum_{k=1}^{\infty}2^{-k(2+\epsilon/2)}.
\end{split}
\end{equation}
Next we prove for $\rho\in[T_{N(j)},T_{N(j)}-\i\sigma]$ the lower bound $\Re h_{\epsilon}'(\rho)\gg T_{N(j)}^{-2-\epsilon}$ uniformly in $\rho$. We have
\begin{equation}\label{testder}
\begin{split}
h'_{\epsilon}(\rho)=-4\sum_{k=1}^{\infty}2^{-n(k)(2+\epsilon/2)}\bigg\{
& \frac{1}{(\rho-T_{N(k)}-T_{N(k)}^\omega-\i\sigma_{0})^{5}}\\
&+\frac{1}{(\rho-T_{N(k)}-T_{N(k)}^\omega+\i\sigma_{0})^{5}}\\
& +\frac{1}{(\rho+T_{N(k)}+T_{N(k)}^\omega+\i\sigma_{0})^{5}}\\
&+\frac{1}{(\rho+T_{N(k)}+T_{N(k)}^\omega-\i\sigma_{0})^{5}}\bigg\}.
\end{split}
\end{equation}
Since
\begin{equation}\label{realp}
\begin{split}
&\Re\left[\frac{1}{(\rho-T_{N(k)}-T_{N(k)}^\omega\pm\i\sigma_{0})^{5}}\right]\\
=&\frac{(\Re\rho-T_{N(k)}-T_{N(k)}^\omega)^{5}}{((\Re\rho-T_{N(k)}-T_{N(k)}^\omega)^{2}+(\Im\rho\pm\sigma_{0})^{2})^{5}}\\
&-\frac{10(\Re\rho-T_{N(k)}-T_{N(k)}^\omega)^{3}(\Im\rho\pm\sigma_{0})^{2}}{((\Re\rho-T_{N(k)}-T_{N(k)}^\omega)^{2}+(\Im\rho\pm\sigma_{0})^{2})^{5}}\\
&+\frac{5(\Re\rho-T_{N(k)}-T_{N(k)}^\omega)(\Im\rho\pm\sigma_{0})^{4}}{((\Re\rho-T_{N(k)}-T_{N(k)}^\omega)^{2}+(\Im\rho\pm\sigma_{0})^{2})^{5}}
\end{split}
\end{equation}
we have for $\rho\in[T_{N(j)},T_{N(j)}-\i\sigma]$
\begin{equation}
\begin{split}
&\sum_{k=1}^{\infty}2^{-n(k)(2+\epsilon/2)}\Re\left[\frac{1}{(\rho-T_{N(k)}-T_{N(k)}^\omega\pm\i\sigma_{0})^{5}}\right]\\
=\;&-2^{-n(j)(2+\epsilon/2)}\frac{T_{N(j)}^{5\omega}-10T_{N(j)}^{3\omega}(\Im\rho\pm\sigma_{0})^{2}+5 T_{N(j)}^\omega(\Im\rho\pm\sigma_{0})^{4}}{(T_{N(j)}^{2\omega}+(\Im\rho\pm\sigma_{0})^{2})^{5}}\\
&+\sum_{k\neq j}2^{-n(k)(2+\epsilon/2)}\Re\left[\frac{1}{(\rho-T_{N(k)}-T_{N(k)}^\omega\pm\i\sigma_{0})^{5}}\right].
\end{split}
\end{equation}
Now it follows from \eqref{realp} and $\Re\rho=T_{N(j)}$ that for $k\neq j$ 
\begin{equation}
\left|\Re\left[\frac{1}{(\rho-T_{N(k)}-T_{N(k)}^\omega\pm\i\sigma_{0})^{5}}\right]\right|\ll|T_{N(j)}-T_{N(k)}-T_{N(k)}^\omega|^{-5}
\end{equation}
and thus
\begin{equation}
\begin{split}
&\left|\sum_{k\neq j}2^{-n(k)(2+\epsilon/2)}\Re\left[\frac{1}{(\rho-T_{N(k)}-T_{N(k)}^\omega\pm\i\sigma_{0})^{5}}\right]\right|\\
\ll&\sum_{k\neq j}2^{-n(k)(2+\epsilon/2)}|T_{N(j)}-T_{N(k)}-T_{N(k)}^\omega|^{-5}.
\end{split}
\end{equation}
To bound this we distinguish two cases. First assume $j>k$. Then $n(j)>n(k)+3$. So we have $T_{N(k)}+T_{N(k)}^\omega\leq 2^{n(k)+2}<2^{n(j)-1}\leq T_{N(j)}$. This implies
\begin{equation}
\begin{split}
&|T_{N(k)}+T_{N(k)}^\omega-T_{N(j)}|\geq 2^{n(j)-1}-2^{n(k)+2}\\
=\;&2^{n(j)-1}(1-2^{n(k)-n(j)+3})\geq2^{n(j)-2}.
\end{split}
\end{equation}
Now assume $j<k$. Then $n(k)>n(j)+3$. In this case we have $T_{N(j)}\leq2^{n(j)+2}<2^{n(k)-1}\leq T_{N(k)}+T_{N(k)}^\omega$. This implies
\begin{equation}
\begin{split}
&|T_{N(k)}+T_{N(k)}^\omega-T_{N(j)}|\geq2^{n(k)-1}-2^{n(j)+2}\\
=\;&2^{n(j)+2}(2^{n(k)-n(j)-3}-1)>2^{n(j)+2}.
\end{split}
\end{equation}
It follows from $2^{n(j)-1}<T_{N(j)}<2^{n(j)+2}$ that
\begin{equation}\label{knotjbound}
\begin{split}
&\sum_{k\neq j}2^{-n(k)(2+\epsilon/2)}|T_{N(j)}-T_{N(k)}-T_{N(k)}^\omega|^{-5}\\
&\leq\;2^{10-5n(j)}\sum_{k\neq j}2^{-n(k)(2+\epsilon)}\\
&\ll T_{N(j)}^{-5}.
\end{split}
\end{equation}
So for large $T_{N(j)}$ (recall $\Re\rho=T_{N(j)}$) positivity of the leading term implies that the real part of $h'_{\epsilon}(\rho)$ is positive. Note (in view of \eqref{testder}) that the contributions from the remaining two terms are of order $O(T_{N(j)}^{-5})$ by the same argument as in \eqref{secondsum}:
\begin{equation}\label{est}
\begin{split}
&\qquad\Re h'_\epsilon(\rho)\\
&=-4\sum_{k=1}^{\infty}2^{-n(k)(2+\epsilon/2)}\Re\Bigg[\frac{1}{(\rho-T_{N(k)}-T_{N(k)}^\omega-\i\sigma_{0})^{5}}\\
&\qquad\qquad\qquad\qquad\qquad\qquad+\frac{1}{(\rho-T_{N(k)}-T_{N(k)}^\omega+\i\sigma_{0})^{5}}\Bigg]\\
&\qquad+O(T_{N(j)}^{-5})
\end{split}
\end{equation}
and by \eqref{knotjbound} this equals
\begin{equation}
\begin{split}
&2^{2-n(j)(2+\epsilon/2)}\frac{T_{N(j)}^{5\omega}-10T_{N(j)}^{3\omega}(\Im\rho-\sigma_{0})^{2}+5 T_{N(j)}^\omega(\Im\rho-\sigma_{0})^{4}}{(T_{N(j)}^{2\omega}+(\Im\rho-\sigma_{0})^{2})^{5}}\\
&+2^{2-n(j)(2+\epsilon/2)}\frac{T_{N(j)}^{5\omega}-10T_{N(j)}^{3\omega}(\Im\rho+\sigma_{0})^{2}+5 T_{N(j)}^\omega(\Im\rho+\sigma_{0})^{4}}{(T_{N(j)}^{2\omega}+(\Im\rho+\sigma_{0})^{2})^{5}}\\
&\qquad+O(T_{N(j)}^{-5})\\
&\gg T_{N(j)}^{-2-\epsilon/2-5\omega}=T_{N(j)}^{-2-\epsilon}
\end{split}
\end{equation}
uniformly in $\rho$. It follows that for $T_{N}(j)$ large and $\rho\in[T_{N(j)},T_{N(j)}-\i\sigma]$ we have $|\Re h'_{\epsilon}(\rho)|=\Re h'_{\epsilon}(\rho)\gg T_{N(j)}^{-2-\epsilon}$ uniformly in $\rho$.
\end{proof}

\end{appendix}

\bibliographystyle{plain}

\begin{thebibliography}{tt}

\bibitem{Al} S. Albeverio, P. Kurasov, \textit{Singular perturbations of differential operators}, LMS Lecture Note Series 271, CUP, Cambridge, 2000.

\bibitem{Bd} P. H. B\'erard, \textit{On the wave equation on a compact Riemannian manifold without conjugate points},  Math. Z. 155, no. 3, 249-276, 1977.

\bibitem{Bz} N. L. Balasz, A. Voros, \textit{Chaos on the Pseudosphere}, Physics Reports 143, 109-240, North-Holland, Amsterdam, 1986.

\bibitem{BF} F. A. Berezin, L. D. Faddeev, Sov. Math. (Doklady) 2, 372, 1961.

\bibitem{Bl} S. M. Blinder, Phys. Rev. A 18, 853, 1978.

\bibitem{Bo} E. Bogomolny, U. Gerland, C. Schmit, \textit{Singular Statistics}, Physical Review E 63, no. 3, 036206, 2001.

\bibitem{Bh} O. Bohigas, M. J. Giannoni, C. Schmit, \textit{Characterization of Chaotic Quantum Spectra and Universality of Level Fluctuation Laws}, Phys. Rev. Lett. 52, 1-4, 1984.

\bibitem{CdV} Y. Colin de Verdi\`ere, \textit{Pseudo-laplaciens I}, Annales de l'institut Fourier  32, no. 3, 275-286, 1982.

\bibitem{CRR} M. Combescure, J. Ralston, D. Robert, \textit{A Proof of the Gutzwiller Semiclassical Trace Formula Using Coherent States Decomposition}, Comm. Math. Phys. 202, 463-480, 1999.

\bibitem{G} M. Gutzwiller, \textit{Chaos in Classical and Quantum Physics}, Springer, New York, 1990.

\bibitem{Hj1} D. Hejhal, \textit{The Selberg trace formula and the Riemann zeta function}, Duke Math. J. 43, 441-482, 1976.

\bibitem{Hj2} D. Hejhal, \textit{The Selberg trace formula for $\PSL(2,\RR)$}, Vol. 1 Lecture Notes in Mathematics 548, Springer-Verlag, Berlin-New York, 1977.

\bibitem{Hj3} D. Hejhal, \textit{The Selberg trace formula for $\PSL(2,\RR)$}, Vol. 2 Lecture Notes in Mathematics 1001, Springer-Verlag, Berlin-New York, 1983.

\bibitem{Hi} L. Hillairet, \textit{Formule de trace semi-classique sur une vari\'ete de dimension 3 avec un potentiel de Dirac}, C. P. D. E. 27, no. 9, 1751-1791, 2002.

\bibitem{Iw} H. Iwaniec, \textit{Spectral methods of automorphic forms}, 2nd ed., \textit{Graduate Studies in Mathematics} 53, AMS, Providence, RI; Rev. Math. Iberoamericana, Madrid, 2002.

\bibitem{JSS} A. K. Jain, S. K. Seb, C. S. Shastry, Am. J. Phys. 46, 147, 1978.

\bibitem{Kr} M. G. Krein, \textit{On a trace formula in perturbation theory}, Matem. Sbornik 33, no. 75, 597-626, 1953.

\bibitem{Mf} J. Marklof, \textit{The Selberg trace formula: an introduction}, Proceedings of the International School "Quantum Chaos on Hyperbolic Manifolds", Schloss Reisensburg, Gunzburg, Germany, 4-11 October 2003. 

\bibitem{PU} T. Paul, A. Uribe, \textit{The Semiclassical Trace Formula and Propagation of Wave Packets}, J. Functional Analysis 132, 192-249, 1995.

\bibitem{Si} M. Reed, B. Simon, \textit{Methods of Modern Mathematical Physics, Volume II: Fourier Analysis and Self-Adjointness}, Academic Press, London, 1975.

\bibitem{Sb} P. Seba, \textit{Wave Chaos in Singular Quantum Billiard}, Phys. Rev. Letters 64, no. 16, 1855-1858, 1990.

\bibitem{S} A. Selberg, \textit{Harmonic Analysis and discontinous groups in weakly symmetric Riemannian spaces with applications to Dirichlet series}, J. Indian Math. Soc. 20, 49-87, 1956.

\bibitem{V} A. B. Venkov, \textit{Selberg's trace formula for an automorphic Schr\"odinger operator}, Leningrad Division of V. A. Steklov Mathematics Institute, Academy of Sciences of the USSR. Translated from Funktsional'nyi Analiz i Ego Prilozheniya 25, no. 2, 26-37, April-June, 1991.

\bibitem{Z} J. Zorbas, \textit{Perturbation of self-adjoint operators by Dirac distributions}, J. Math. Phys. 21, 840, 1980.

\end{thebibliography}

\end{document}